\newtheorem{theorem}{Theorem}[section]
\newtheorem{lemma}[theorem]{Lemma}
\newtheorem{definition}[theorem]{Definition}
\newcommand{\note}[1]{
\bigskip
\noindent {\bf Note:} {\em #1}\\

\bigskip}
\newcommand{\bigO}[1]
{
    \ifthenelse{\isempty{#1}}
        {O}
        {O(#1)}
}
\newcommand{\edgeindnum}[1]
{
    \ifthenelse{\isempty{#1}}
        {\mathcal{V}}
        {\mathcal{V}(#1)}
}
\begin{document}

  \title{Random Gossip Processes in Smartphone Peer-to-Peer Networks}
 
\author{
  Newport, Calvin\\
  Georgetown University\\
  \texttt{cnewport@cs.georgetown.edu}
  \and
  Weaver, Alex\\
  Georgetown University\\
  \texttt{aweaver@cs.georgetown.edu}
}
  \maketitle

\begin{abstract}
In this paper, we study random gossip processes in communication models that describe the peer-to-peer networking functionality included in standard smartphone operating systems. Random gossip processes spread information through the basic mechanism of randomly selecting neighbors for connections.
These processes are well-understood in standard peer-to-peer network models,
but little is known about their behavior in models that abstract the smartphone peer-to-peer setting.
With this in mind, we begin by studying a simple random gossip process in the synchronous mobile telephone model (the most common abstraction used to study smartphone peer-to-peer systems).
By introducing a new analysis technique, we prove that this simple process is actually more efficient than the best-known gossip algorithm in the mobile telephone model, which required complicated
coordination among the nodes in the network.
We then introduce a novel variation of the mobile telephone model that removes the synchronized round assumption, shrinking the gap between theory and practice.
We prove that simple random gossip processes still converge in this setting and that information spreading still improves along with graph connectivity.
This new model and the tools we introduce provide a solid foundation for the further theoretical analysis of algorithms meant to be deployed on real smartphone peer-to-peer networks.
More generally, our results in this paper imply that simple random information spreading processes should be expected to perform well in this emerging new peer-to-peer setting.
%
%
% per the format, all I see from the submission instructions is this "Standard IEEE conference templates for LaTeX formats are found at here: http://www.ieee.org/conferences_events/conferences/publishing/templates.html" and they mention using A4 or US letter, and as far as I can tell this is standard for US letter. It seems to me to be in the same format as the templates in their link but I don't exactly know what specifically to look for
%
\end{abstract}

\begin{IEEEkeywords}
gossip, distributed algorithms, peer-to-peer networks
\end{IEEEkeywords}

\section{Introduction} \label{sec:intro}

In this paper, we study random gossip processes in smartphone peer-to-peer networks.
We prove the best-known gossip bound in the standard synchronous model used to describe this setting,
and then establish new results  in a novel asynchronous variation of this model that more directly matches the real world behavior of smartphone networks.
Our results imply that  simple information spreading strategies work surprisingly well in this complicated but increasingly relevant environment.

In more detail, a random gossip process is a classical strategy for spreading messages through a peer-to-peer network.
It has the communicating nodes randomly select connection partners from their eligible neighbors,
and then once connected exchange useful information.\footnote{The main place where different random gossip processes vary is in their definition of ``eligible." 
What unites them is the same underlying approach of random connections to nearby nodes.}
As elaborated in Section~\ref{sec:related},
these random processes are well-studied in standard peer-to-peer models where they have been shown to spread information efficiently
despite their simplicity.

To date, however, little is known about these processes in the emerging setting of {\em smartphone} peer-to-peer networks,
in which nearby smartphone devices connect with direct radio links that do not require WiFi or cellular infrastructure.
As also elaborated in Section~\ref{sec:related}, 
both Android and iOS now provide support for these direct peer-to-peer
connections, enabling the possibility of smartphone apps that generate large peer-to-peer networks
that can be deployed, for example, when  
 infrastructure is unavailable (i.e., due to a disaster) or censored (i.e., due to government repression).
 This paper investigates whether the random gossip processes that have been shown to spread information well in other peer-to-peer
 settings will prove similarly useful in this intriguing new context.
 
\subsection{The Mobile Telephone Model (MTM)}
The \emph{mobile telephone model} (MTM), introduced by Ghaffari and Newport \cite{ghaffari:2016}, extends the well-studied 
\emph{telephone model} of wired peer-to-peer networks (e.g.,\cite{frieze1985shortest,telephone1,telephone2,telephone3,giakkoupis2011tight,chierichetti2010rumour,giakkoupis2012rumor,fountoulakis2010rumor,giakkoupis2014tight})
to better capture the dynamics of standard smartphone peer-to-peer libraries.
In recent years, 
several important peer-to-peer problems have been studied in
the MTM, including rumor spreading \cite{ghaffari:2016}, load balancing \cite{dinitz:2017}, leader election \cite{newport:2017}, and gossip \cite{newport:2017b}.

As we elaborate in Section~\ref{sec:mtm},
the mobile telephone model describes a peer-to-peer network topology with an undirected graph,
where the nodes correspond to the wireless devices, and an edge between two nodes indicates 
the corresponding devices are close enough to enable a direct device-to-device link.
Time proceeds in synchronous rounds.
At the beginning of each round,
each node can {\em advertise} a bounded amount of information to its neighbors in the topology.
At this point, each node can then decide to either send a connection {\em invitation} to a neighbor,
or instead receive these invitations, choosing at most one incoming invitation to accept, forming a connection.
Once connected, a pair of node can perform a bounded amount of communication before the round ends.
Each node is limited to participate in at most one connection per round.

\subsection{Gossip in the MTM}
The gossip problem assumes that $k$ out of the $n \geq k$ nodes start with a gossip message. The problem is solved once all nodes have learned all $k$ messages.
In the context of the MTM, we typically assume that at most $O(1)$ gossip messages can be transferred over a connection in a single round, and that advertisements are bounded 
to at most $O(\log{n})$ bits.

A natural random gossip process in this setting is the following: In each round, each node advertises a hash of its token set and flips a fair coin to decide whether to send or receive connection invitations. If a node $u$ decides to send, and it has at least one neighbor advertising a different hash (implying non-equal token sets), then it selects among these neighbors with uniform randomness to choose a single recipient of a connection invitation. If the invitation is accepted the two nodes exchange a constant number of tokens in their set difference.

It is straightforward to establish that with high probability in $n$ this process solves gossip in $O(nk)$ rounds.
The key insight is that in every round there is at least one potentially productive connection, and that there can be at most $O(nk)$ such connections before all nodes know all messages.
(See~\cite{newport:2017b} for the details of this analysis.)

In our previous work on gossip in the MTM~\cite{newport:2017b},
we explored the conditions under which you could improve on this crude $O(nk)$ bound.
We conjectured that a more sophisticated analysis could show that simple random processes improve on the $nk$ bound given sufficient graph connectivity,
but were unable to make such an analysis work.
Accordingly, in~\cite{newport:2017b} we turned our attention to a more complicated gossip algorithm called {\em crowded bin}.
Unlike the simple structure of random gossip processes, crowded bin requires non-trivial coordination among nodes,
having them run a distributed size estimation protocol (based on a balls-in-bins analysis) on $k$, and then using these estimates
to parametrize a distributed TDMA protocol that eventually enables $k$ independent token spreading processes to run in parallel.

In~\cite{newport:2017b}, we prove that crowded bin solves gossip in $O((k/\alpha)\log^5{n})$ rounds, with high probability in $n$,
 when run in a network topology with vertex expansion $\alpha$ (see below).
For all but the smallest values of $\alpha$ (i.e., least amounts of connectivity), this result is an improvement over the crude $O(nk)$ bound achieved by the random process.

A key open question from this previous work is whether or not it is possible to close the time complexity gap between the appealingly simple random gossip
processes and the more complicated machinations of crowded bin.
As we detail next, this is the question tackled in this paper.

\subsection{New Result \#1:  Improved Analysis for Gossip in the MTM}
In Section~\ref{sec:mtm}, 
we consider a variation of the simple random gossip process described above modified only slightly such that a node only considers a neighbor eligible if 
it advertises a different hash {\em and} it has not recently attempted a connection with that particular neighbor. We call this variation {\em random spread} gossip

By introducing a new analysis technique, we significantly improve on the straightforward $O(nk)$ bound for random gossip processes like random spread.
Indeed, we prove this process is actually slightly {\em more} efficient than the more complicated crowded bin algorithm
from~\cite{newport:2017b}, showing that with high probability in $n$, random spread requires only $O((k/\alpha)\log^4{n})$ rounds to spread all the messages
in a network with vertex expansion $\alpha$.

The primary advantage of random spread gossip is its simplicity. As with most random gossip processes,
its behavior is straightforward and easy to implement as compared
to existing solutions. 
A secondary advantage is that this algorithm works in the {\em ongoing} communication scenario in which new rumors
keep arriving in the system. Starting from any point in an execution, if there are $k$ rumors that are not yet fully disseminated,
they will reach all nodes in at most an additional $O((k/\alpha)\log^4{n})$ rounds, 
regardless of how many rumors have been previously spread.
The solution in~\cite{newport:2017b},
by contrast,
must be restarted for each collection of $k$ rumors,
and includes no mechanism for devices to discover that gossip has completed for the current collection.
Accordingly, this new result fully supersedes the best known existing results for gossip in the MTM under similar assumptions.\footnote{In~\cite{newport:2017b},
we also study gossip under other assumptions, like changing communication graphs and the lack of good hash functions.}

At the core of our analysis is a new data structure we call a {\em size band table} that tracks
the progress of the spreading rumors.
We use this table to support an amortized analysis of spreading that proves that
the stages in which rumors spread slowly are balanced out sufficiently by stages in which they spread quickly,
providing a reasonable average  rate.

\subsection{New Result \#2: Gossiping in the Asynchronous MTM}
The mobile telephone model is a
high-level abstraction that captures the core dynamics of smartphone peer-to-peer communication,
but it does not exactly match the behavior of real smartphone networking libraries.
The core difference between theory and practice in this context is synchronization.
To support deep analysis, 
this abstract model (like many models used to study distributed graph algorithms)
synchronizes devices into well-defined rounds.
Real smartphones, by contrast, do not offer this synchronization.
%The time required for advertisements or connection invitations to propagate,
%and the rate at which communication occurs over accepted connections,
%might be bounded, but these bounds can vary between different deployments and are not known in advance to the algorithm.
It follows that algorithms developed in the mobile telephone model cannot be directly implemented on
real hardware.

With the goal of closing this gap,
in Section~\ref{sec:amtm}
we introduce the {\em asynchronous mobile telephone model} (aMTM),
a variation of the MTM that removes the synchronous round assumption,
allowing nodes and communication to operate at different speeds.
The main advantage of the aMTM, is that algorithms specified and analyzed
in the aMTM can be
directly implemented using existing smartphone peer-to-peer libraries.
The main disadvantage is that the introduction of asynchrony complicates analysis.

In Section~\ref{sec:amtm},
we first study the question of whether simple random gossip processes even still converge in a setting
where nodes and messages can operate at different speeds controlled by an adversary. 
We answer this question positively by proving that a simple random gossip process solves gossip in $O(nk\delta_{max})$
time, where $\delta_{max}$ is an upper bound on the maximum time certain key steps can occur
(as is standard, we assume $\delta_{max}$ is determined by an adversary, can change between executions, and is unknown to
the algorithm).

 We then tackle the question of whether it is still possible to show
 that the time complexity of
  information spreading improves with vertex expansion in an asynchronous setting.
  The corresponding analyses in the synchronous MTM,
  which treats nodes as implicitly running an approximate maximum matching algorithm between nodes that know a certain token and those that do not,
   depend heavily on the synchronization of node behavior.
   
   We introduce a novel analysis technique,
 in which we show that the probabilistic connection behavior in the aMTM over time sufficiently approximates
 synchronized behavior to allow our more abstract graph theory results to apply.
 In particular, we prove that for $k=1$, the single message spreads in at most $O(\sqrt{(n/\alpha)}\cdot \log^2{n\alpha} \cdot \delta_{max})$ time.
 This result falls somewhere between our previous $O(n \delta_{max})$ result for gossip with $k=1$ in the aMTM,
 and the bound of $O(\text{polylog}(n)/\alpha)$  rounds possible in the synchronous MTM for $k=1$.
 The remaining gap with the synchronous results seems due
  the ability of synchronous algorithms to keep a history of recent connection attempts (crucial to the underlying matching analysis),
 whereas in the asynchronous model such histories might be meaningless if some nodes are making connections attempts much faster than others.
 
We argue that our introduction of the aMTM, as well as a powerful set of tools for analyzing information spreading in this setting,
provides an important foundation for the future study of communication processes in realistic smartphone peer-to-peer models.

\section{Related Work}
\label{sec:related}

In recent years, there has been a growing amount of research on smartphone peer-to-peer 
networking~\cite{suzuki2012soscast,aloi2014spontaneous,reina2015survey,lu2016networking,holzer2016padoc,firechat,oghostpot} 
(see~\cite{nishiyama2014relay} for a survey).
%A major factor differentiating this setting from those studied in the voluminous existing literature on ad hoc, mobile ad hoc, and sensor networks,
%is the ubiquity of the platform (there are more than 3.9 billion smartphone subscriptions worldwide~\cite{smartphones}) and the ease of deploying software (through well-established apps stores).
%
There has also been recent work on using device-to-device links to improve cellular network performance, e.g., the inclusion of peer-to-peer connections
in the emerging LTE-advanced standard~\cite{doppler2009device,wang2015energy,liu2015device}, but these efforts differ from the peer-to-peer applications studied here as they typically assume coordination provided by the cellular infrastructure.

%In this paper, we propose rigorously analyzing peer-to-peer algorithms in abstractions such as the mobile telephone model,
%and then using simulation strategies to deploy these algorithms on real smartphones.
In this paper, 
we both study and extend the mobile telephone model  introduced in 2016 by Ghaffari and Newport~\cite{ghaffari:2016}.
This model modifies the classical telephone model of wired peer-to-peer networks (e.g., \cite{frieze1985shortest,telephone1,telephone2,telephone3,giakkoupis2011tight,chierichetti2010rumour,giakkoupis2012rumor,fountoulakis2010rumor,giakkoupis2014tight}) to better match the constraints and capabilities of the smartphone setting.
In particular, the mobile telephone model differs from the classical telephone model in that it allows small advertisements but restricts the number
of concurrent connections at a given node. As agued in~\cite{ghaffari:2016}, these differences (especially the latter) significantly change achievable results,
algorithm strategy, and
analysis techniques. The details of this model are inspired, in particular, but the multipeer connectivity framework offered in iOS.

Our random spread gossip algorithm disseminates $k$ rumors in at most $O((k/\alpha)\log^4{n})$ rounds in the mobile telephone model
in a network with $n$ nodes and vertex expansion $\alpha$ (see Section~\ref{sec:model:vertex}).
The previous best known algorithm for this model is crowded bin gossip~\cite{newport:2017b},
which is significantly more complicated and requires $O((k/\alpha)\log^5{n})$ rounds.\footnote{In~\cite{newport:2017b}, crowded bin
is listed as requiring $O((k/\alpha)\log^6{n})$ rounds, but that result assumes single bit advertisements in each round---requiring devices to spell out
control information over many rounds of advertising. To normalize with this paper,
in which tags can contain $\log{n}$ bits, crowded bin's time complexity improves by a $\log$ factor. We note that~\cite{newport:2017b} also explores slower
gossip solutions for more difficult network settings not considered here; e.g., changing network topologies and the absence of advertisements.}

To put these time bounds into context, 
we note that previous work in the mobile telephone model solved rumor spreading~\cite{ghaffari:2016}
 and leader election~\cite{newport:2017} in $O(\text{polylog}(n)/\alpha)$ rounds.
In the classical telephone model, 
a series of papers~\cite{chierichetti2010rumour,giakkoupis2012rumor,fountoulakis2010rumor,giakkoupis2014tight}  (each optimizing the previous) established that simple random rumor spreading requires $O(\log^2{n}/\alpha)$ rounds~\cite{giakkoupis2014tight},
which is optimal in the sense that for many $\alpha$ values,
there exists networks with a diameter in $\Omega(\log^2{n}/\alpha)$.
The fact that our gossip solution increases these bounds by a factor of $k$ (ignoring log factors) is natural
given that we allow only a constant number of tokens to be transferred per round.

As mentioned, random gossip processes more generally have been studied in other network models. These abstractions generally model time as synchronized rounds and by definition require nodes to select a neighbor uniformly at random in each round \cite{boyd2006gossip} \cite{karp2000rumor}. More recent work has demonstrated that these protocols take advantage of key graph properties such as vertex expansion and graph conductance \cite{moskaoyama2006conductance}. Asynchronous variants of these protocols have also been explored, where asynchrony is captured by assigning each node a clock following an unknown but well-defined probability distribution \cite{boyd2006gossip} \cite{giakkoupis2016rumor}.
The asynchronous MTM model introduced in our paper, by contrast, 
deploys a more general and classical approach to asynchrony in which an adversarial scheduler controls the time required for key events in a worst-case fashion.
\iffalse
\note{We need some examples here...look for classical random gossip results...I think they sometimes call the results model the GOSSIP model. Some of the papers cited
above next to ``classical telephone model" might point you in the right direction. Just two or so results might be fine.}
\fi
%Finally, we note that%
%Censor-Hillel et~al.~\cite{censor2012global} provide a more involved algorithm that spreads information
%in $O(D + \text{polylog}(n))$ rounds, regardless of the graph's connectivity.
%This result depends heavily on the assumption of unbounded message size and the ability for a single node
%to service an unbounded number of incoming connections in a single round.
%It is unclear, therefore, whether efficient information dissemination independent of graph connectivity is possible in our model.

%Finally,
%we note that our approach of adapting algorithms from high-level abstractions to real devices through simulation is not the only way to simplify
%smartphone peer-to-peer network design. 
%The Open Garden MeshKit SDK~\cite{meshkit}, for example,
%provides an abstraction layer between the application developer and smartphone peer-to-peer functionality that implements additional useful
%functions, like a global name space and private message routing.
%The Padoc middleware~\cite{holzer2016padoc}, to name another example, 
%extends the Multipeer Connectivity framework to better support background operation,
%among other improvements.
\section{Random Gossip in the Mobile Telephone Model}
\label{sec:mtm}

Here we study a simple random gossip process in the mobile telephone model.
We begin by formalizing the model, the problem, and some graph theory preliminaries,
before continuing with the algorithm description and analysis.

\subsection{The Mobile Telephone Model}
\label{sec:mtm}

%\label{sec:mtm}
%
The mobile telelphone model describes a smartphone peer-to-peer network topology as an undirected connected graph 
$G=(V,E)$. A computational process (called a {\em node} in the following) is assigned to each vertex in $V$.
%Let $n=|V|$  indicates the network size.
The edges in $E$ describe which node pairs are within communication range.
In the following, we use $u\in V$ to indicate both the vertex in the topology graph as well
as the computational process (node) assigned to that vertex.
We use $n=|V|$ to indicate the network size.

Executions proceed in synchronous rounds labeled $1,2,...$,
and we assume all nodes start during round $1$.
%The standard assumption is that all nodes start during the same round.
%We will later address a straightforward strategy to synchronize the round numbers in a setting
%where nodes join the network a different times.
%
At the beginning of each round,
each node $u\in V$ selects an {\em advertisement} to broadcast to its neighbors $N(u)$ in $G$.
This advertisement is a bit string containing no more than $O(\log{n} + \ell_h)$ bits,
where $\ell_h$ is the digest length of a standard hash function parameterized to obtain the 
desired collision resistance guarantees.
After broadcasting its advertisement,
node $u$ then receives the advertisements broadcast by its neighbors in $G$ for this round.
%\footnote{In earlier studies of the mobile telephone model,
%we described the tag length with a parameter $b$ and considered smaller bounds such as $b=1$ and $b=\log\log{n}$.
%As described in our discussion of the practical Mobile Telephone Model,
% subsequent experimentation with real smartphone peer-to-peer standards revealed that the advertisement tags are actually quite large,
 %and both $\log{n}$ and $\ell_h$ are conservative bounds for the number of bytes allowed for these advertisements in real systems.}

At this point, $u$ decides to either {send} a {\em connection invitation}
to a neighbor, or passively receive these invitations.
If $u$ decides to receive, and at least one connection invitation arrives at $u$, 
then node $u$ can select at most one such incoming invitation to {\em accept}, forming a connection
between $u$ and the node $v$ that sent the accepted invitation.
Once $u$ and $v$ are connected,
they can perform a bounded amount of reliable interactive communication before the round ends,
where the magnitude of this bound is specified as a parameter of the problem studied.
Notice that the model does not guarantee to deliver $u$ all invitations sent to $u$ by its neighbors.
It instead only guarantees that if at least one neighbor of $u$ sends an invitation,
then $u$ will receive a non-empty subset (selected arbitrarily)
of these invitations before 
it must make its choice about acceptance. 

If $u$ instead chooses to send a connection invitation to a neighbor $v$,
there are two outcomes. If $v$ accepts $u$'s invitation,
a connection is formed as described above.
Otherwise, $u$'s invitation is implicitly rejected.

\subsection{The Gossip Problem}
\label{sec:model:gossip}
The gossip problem is parameterized with a token count $k>0$.
It assumes $k$ unique {\em tokens} are distributed to nodes at the beginning of the execution.
The problem is solved once all nodes have received all $k$ tokens.
We treat the tokens as black boxes objects that are large compared to the advertisements.
With this in mind, we assume the only ways 
for a node $u$ to learn token $t$ are: (1) $u$ starts with token $t$; or (2) a node $v$ that previously learned $t$
sends the token to $u$ during a round in which $v$ and $u$ are connected.

We assume that at most a constant number of tokens can be sent over a given connection.
Notice that this restriction enforces a trivial $\Omega(k)$ round lower bound for the problem.

\subsection{Vertex Expansion}
\label{sec:model:vertex}

%\label{sec:vertex}
Some network topologies are more suitable for information dissemination than others.
In a clique, for example, a message can spread quickly through epidemic replication,
while spreading a message from one endpoint of a line to another is necessarily slow.
With this in mind,
the time complexity of
information dissemination algorithms are often expressed with respect to graph connectivity
metrics such as {\em vertex expansion} or {\em graph conductance}.
In this way, an algorithm's performance can be proved to improve along with available connectivity.

In this paper, 
as in  previous studies of algorithms in the mobile telephone model~\cite{ghaffari:2016,dinitz:2017,newport:2017,newport:2017b},
we express our results with respect to vertex expansion (see~\cite{ghaffari:2016} for an extended discussion
of why this metric is more appropriate than conductance in our setting).
Here we define this metric and establish a useful related property.

For fixed undirected connected graph $G=(V,E)$,
and a given $S \subseteq V$, we define the {\em boundary} of $S$, indicated $\partial S$, as follows:
 $\partial S = \{ v\in V \setminus S : N(v) \cap S \neq \emptyset\}$: that is, $\partial S$ is the set
 of nodes not in $S$ that are directly connected to $S$ by an edge in $E$.
 We define $\alpha(S) = |\partial S|/|S|$.
We define the {\em vertex expansion} $\alpha$ of a given graph $G = (V,E)$
 as follows:
 
 \[  \alpha = \min_{S \subset V, 0 < |S| \leq n/2} \alpha(S). \]
 
 \noindent Notice that despite the possibility of $\alpha(S) >1$ for some $S$, we always have $\alpha \leq 1$.
In more detail, this parameter ranges from $2/n$ for poorly connected graphs (e.g., a line)
to values as large as $1$ for well-connected graphs (e.g., a clique).
Larger values indicate more potential for fast information dissemination.

%\paragraph{Connecting Expansion to Independence Number.}
The mobile telephone model requires the set of pairwise connections in a given round to form a matching
in the topology graph $G=(V,E)$.
The induces a connection between
maximum matchings and the maximum amount of potential communication in a given round.
Here we adapt a useful result from~\cite{ghaffari:2016} that formalizes
the relationship between vertex expansion and these matchings as defined with respect
to given partition.

In more detail, 
for a given graph $G=(V,E)$ and node subset $S\subset V$,
we define $B(S)$ to be the bipartite graph with bipartitions $(S,V \setminus S)$,
and the edge set $E_S = \{ (u,v): (u,v) \in E$, $u\in S$, and $v\in V \setminus S\}$.
 Recall that the {\em edge independence number} of a graph $H$,
denoted $\nu(H)$, describes the size of a maximum matching on $H$.
For a given $S$, therefore, $\nu(B(S))$ describes the maximum number 
of concurrent connections that a network can support in the mobile telephone model between nodes in $S$
and nodes outside of $S$.  This property follows from the restriction in this model that each node can participate
in at most one connection per round. 

The following result notes that the vertex expansion does a good job of approximating the size
of the maximum matching across any partition:

 \begin{lemma}[from~\cite{ghaffari:2016}]
 Fix a graph $G=(V,E)$ with $|V| = n$ with vertex expansion $\alpha$.
Let $\gamma = \min_{S\subset V, |S| \leq n/2}\{ \nu(B(S))/|S|  \}$.
It follows that $\gamma \geq \alpha/4$.
\label{lem:msize}
\end{lemma}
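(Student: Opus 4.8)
The plan is to fix an arbitrary $S \subset V$ with $0 < |S| \le n/2$ and show directly that $\nu(B(S)) \ge \alpha|S|/4$; since $\gamma$ is the minimum of $\nu(B(S))/|S|$ over exactly such sets, this suffices. Because $B(S)$ is bipartite, I would invoke König's theorem to trade the matching number for the size of a minimum vertex cover: let $C$ be a minimum vertex cover of $B(S)$, so that $\nu(B(S)) = |C|$. Every edge of $B(S)$ runs between $S$ and $V\setminus S$, so I may take $C \subseteq S \cup (V\setminus S)$ and split it as $C_S = C \cap S$ and $C_\partial = C \setminus S$, where $C_\partial$ necessarily sits inside $\partial S$.

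The heart of the argument is to examine the \emph{uncovered} part of $S$, namely $S' = S \setminus C_S$, and to show that its boundary in the whole graph is trapped inside the cover: $\partial S' \subseteq C$. I would establish this by a two-case analysis on a vertex $v \in \partial S'$, which by definition lies outside $S'$ but has a neighbor $u \in S'$. If $v \in S$, then $v \in S \setminus S' = C_S \subseteq C$. If $v \notin S$, then $(u,v)$ is a genuine cross edge of $B(S)$ whose $S$-endpoint $u$ is uncovered, so the cover must use $v$, giving $v \in C_\partial \subseteq C$. Hence $|\partial S'| \le |C| = \nu(B(S))$.

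Next I apply the expansion guarantee to $S'$. Since $S' \subseteq S$ we have $|S'| \le n/2$, so whenever $S' \neq \emptyset$ the definition of $\alpha$ yields $|\partial S'| \ge \alpha|S'|$, and therefore $\alpha|S'| \le \nu(B(S))$. Combining this with $|S'| = |S| - |C_S| \ge |S| - \nu(B(S))$ gives
\[
  \alpha\bigl(|S| - \nu(B(S))\bigr) \le \nu(B(S)),
\]
which rearranges, using $\alpha \le 1$, to $\nu(B(S)) \ge \alpha|S|/(1+\alpha) \ge \alpha|S|/2 \ge \alpha|S|/4$. The degenerate case $S' = \emptyset$ means $C_S = S$, forcing $\nu(B(S)) = |C| \ge |S| \ge \alpha|S|/4$ outright, so it causes no trouble; note in fact that this route even yields the stronger constant $\alpha/2$.

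The step I expect to be the real crux is the containment $\partial S' \subseteq C$: it is exactly where minimality of the cover and the bipartite structure must be combined correctly, and it is easy to botch by conflating the graph boundary $\partial S'$ (which can reach back into $S$) with the bipartite neighborhood of $S'$ inside $V\setminus S$. A more naive route through the deficiency form of Hall's theorem stumbles here precisely because a Hall-violating set $X \subseteq S$ may have most of its neighbors sitting \emph{inside} $S$, so that its small bipartite neighborhood reveals nothing about $|\partial X|$; the König formulation sidesteps this by charging the entire boundary of $S'$ to the cover at once. I would therefore concentrate the care on that single claim and treat the surrounding arithmetic as routine.
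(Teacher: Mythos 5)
Your proof is correct, and it is worth noting at the outset that the paper itself gives no proof of this lemma: it is imported verbatim from the cited prior work, so there is no in-paper argument to match against. Your K\"onig-based route is sound. The crux you identify --- the containment $\partial S' \subseteq C$ --- checks out: a boundary vertex $v$ of $S' = S \setminus C_S$ lying in $S$ must be in $C_S$ by definition of $S'$, and one lying outside $S$ must be in $C$ because it covers a $B(S)$-edge whose $S$-endpoint is uncovered. Combined with $|S'| \geq |S| - \nu(B(S))$ and the expansion bound applied to $S'$ (legitimate, since $\emptyset \neq S' \subseteq S$ gives $|S'| \leq n/2$), the arithmetic yields $\nu(B(S)) \geq \alpha|S|/(1+\alpha) \geq \alpha|S|/2$, strictly stronger than the claimed $\alpha/4$. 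The more standard argument behind the $\alpha/4$ constant (as in the source the paper cites) avoids K\"onig and instead uses the endpoints of a maximum (hence maximal) matching as a vertex cover of size $2\nu(B(S))$, bounding the boundary of the unmatched part of $S$ by $2\nu(B(S))$ rather than $\nu(B(S))$; that extra factor of two is exactly what degrades $\alpha/2$ to $\alpha/4$ (after a case split on whether $\nu(B(S)) \geq |S|/2$). Your version buys a better constant at the cost of invoking K\"onig's theorem; the matching-endpoints version is more elementary. Either way the lemma as stated follows, and your handling of the degenerate case $S' = \emptyset$ is also correct since there $\nu(B(S)) = |C| \geq |S|$.
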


%\paragraph{Discussion.}
%\note{Need some discussion. Add citations to other papers that use vertex expansion.
%Note that $\log{n}/\alpha$ lower bound. Point to our paper from 2016 to establish that graph conductance---which is often used
%in other peer-to-peer models---is not a good parameter in mobile telephone model. It is easy to show that there exists
%graphs with good conductance but where efficient information dissemination is impossible if connections are a matching in each round (i.e., give star example).}

%We measure the performance of a given gossip algorithm by the number of rounds required for
%it to solve the problem (disseminate the $k$ tokens to all $n$ nodes) with a given probability.
%We will typically express these time bounds as a combingation of the network size, token count,
%and relevant graph parameter capturing the connectedness of $G$ (e.g., vertex expansion).

\subsection{The Random Spread Gossip Algorithm}
\label{sec:alg}

%We begin by describing the random spread gossip algorithm.
%We then analyze its time complexity.
 
%\subsection{Algorithm Description.}
We formalize our random spread gossip algorithm with the pseudocode labeled Algorithm~1.
Here we summarize its behavior.

The basic idea of the algorithm is that in each round, each node advertises a hash of their token set.
Nodes then attempt to connect only to neighbors that advertised a different hash, 
indicating their token sets are different. 
When two nodes connect,
they can transfer a constant number of tokens in the non-empty set difference
of their respective token sets.

As detailed in the pseudocode, the random spread algorithm implements the above strategy combined with some minor additional structure
that supports the analysis. 
In particular, nodes partition rounds into {\em phases} of length $\lceil \log{N} \rceil$, where $N > 1$ is an upper bound on the maximum
degree $\Delta$ in the network topology.
Instead of each node deciding whether to send or receive connection invitations at the beginning of {\em each round},
they make this decision at the beginning of {\em each phase}, and then preserve this decision throughout the phase (this is captured in the pseudocode
with the $status$ flag that is randomly set every $\lceil \log{N} \rceil$ rounds).
Each receiver node also advertises whether or not it has been involved in a connection already during the current phase
(as captured with the $done$ flag).
A sender node will only consider neighbors that advertise a different hash, are receivers in the current phase,
{\em and} have not yet been involved in a connection during the phase.

%%%%%%%%%%%%%%%%%%%%%%
\begin{algorithm}

% -------------------------------------------------------
\begin{algorithmic}
\State
\State \underline{Initialization:}  
\State $N \gets$ upper bound on maximum degree in topology
\State $T \gets$ initial tokens (if any) known by $u$
\State $H$ is a hash function
\State
\State \underline{For each round $r$:}
\State
\If{$r$ mod $\lceil \log{N} \rceil$ = 1}
	\State $status \gets$ random bit (1=sender; 0=receiver)
	\State $done \gets 0$
\EndIf
\State
\State {\tt Advertise}$(\langle status, done, H(T,r), u \rangle)$ 
\State $A \gets$ {\tt RecvAdvertisements}$()$
\State
\State $A' \gets \{ v\mid \langle 0,0,h,v \rangle \in A, h \neq H(T,r)\}$
\If{$status=1$ {\bf and} $|A'| > 0$} 
	\State $v \gets$ node selected with uniform randomness from $A'$
	\State {\em (Attempt to connect with $v$. If successful, exchange a token in set difference.)}
	%\State {\bf SendConnectionProposal}$(v)$
	%\State (If $v$ accepts then exchange a token in the set difference and set $done \gets 1$)
\ElsIf{$status=0$} 
	\State {\em (If receive connection proposal(s): accept one, exchange token in the set difference,  set $done \gets 1$.)}
\EndIf
\end{algorithmic}
% -------------------------------------------------------

\label{alg:1}
\caption{Random spread gossip (for node $u$).}
\end{algorithm}
%%%%%%%%%%%%%%%%%%%%%

\subsection{Analysis of Random Spread Gossip}
Our goal is to prove the following result about the performance of random spread gossip:

\begin{theorem}
With high probability, 
the random spread gossip algorithm solves the gossip problem 
in $O((k/\alpha)\log^2{n}\log{N}\log{\Delta})$ rounds, when executed with $k>0$ initial tokens and degree bound $N\geq \Delta$,
 in a network topology graph of size $n$, maximum degree $\Delta$, and vertex expansion $\alpha$.
\label{thm:main}
\end{theorem}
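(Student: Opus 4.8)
The plan is to track, for each token $t$, the set $S_t \subseteq V$ of nodes that currently know $t$, and to show that $S_t$ grows from its initial value to all of $V$ quickly. The whole argument rests on Lemma~\ref{lem:msize}: whenever $|S_t| \le n/2$, the cut $B(S_t)$ admits a matching with $\nu(B(S_t)) \ge (\alpha/4)|S_t|$, so there are $\Omega(\alpha |S_t|)$ vertex-disjoint edges joining informed to uninformed nodes, each a potential avenue for $t$ to cross the frontier. Before anything else I would discharge the hash machinery: assuming the digest length $\ell_h$ is set for collision resistance, a union bound over all (node, round) pairs shows that with high probability two nodes advertise equal hashes only when their token sets are genuinely equal, so the eligible set $A'$ computed in Algorithm~1 is exactly the set of neighbors that are receivers, not yet done this phase, and hold a token the sender lacks. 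Thus every matched pair $(u,v)$ with $u\in S_t$, $v\notin S_t$ is mutually eligible.

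Next I would prove a single-phase \emph{matching-realization} lemma, the technical heart of the growth argument. Fix a phase and the informed set $S=S_t$ at its start, and let $M$ be a maximum matching of $B(S)$ with $|M| = \nu(B(S))$. Each endpoint is independently a sender or receiver for the whole phase, so for each edge of $M$ the ``good'' orientation (informed endpoint sends and uninformed receives, or vice versa) occurs with probability $1/4$. Conditioning on these choices, I would model the $\lceil\log N\rceil$ rounds of the phase as a balls-into-bins process: in each round every still-unconnected sender throws one invitation at a uniformly random eligible neighbor, and every receiver that is hit accepts once and sets $done=1$, permanently leaving all eligible sets. The $done$ flag guarantees the connections formed within a phase are themselves a matching, and the history restriction prevents a sender from wasting repeated attempts on the same neighbor. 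The goal is to show that an $\Omega(1/\log\Delta)$ fraction of the matched uninformed endpoints of $M$ become connected (hence informed) during the phase; the $\log\Delta$ loss is the price of contention at high-degree ``popular'' receivers and of the disparity between the optimal matching $M$ and the random matching the process actually realizes. Combined with $|M|\ge(\alpha/4)|S|$, this yields $\mathbb{E}[\,|S_{t}|\text{ after the phase}\,] \ge \bigl(1+\Omega(\alpha/\log\Delta)\bigr)|S|$, and a concentration bound over the (near-)independent per-edge successes upgrades this to a high-probability multiplicative increase.

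Iterating the single-phase lemma gives geometric growth: it takes $O((\log\Delta)/\alpha)$ phases to double $|S_t|$, and once $S_t$ passes $n/2$ the symmetric argument applied to $V\setminus S_t$ (whose boundary matching is also controlled by Lemma~\ref{lem:msize}) drives it to $\emptyset$. Since there are $O(\log n)$ doublings and each phase has length $\lceil\log N\rceil$, a single token would saturate in $O((1/\alpha)\log n\,\log N\,\log\Delta)$ rounds, with an extra $\log n$ factor absorbing the high-probability union bound over all tokens and phases. The remaining ingredient is congestion: because a connection carries only $O(1)$ tokens, the $k$ tokens cannot all ride every realized matching edge at once, and a node missing many tokens needs many successful connections. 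Here I would deploy the \emph{size band table}, classifying each spreading token by the current size of the frontier it faces and amortizing the $O(1)$-token-per-connection budget across bands: realized connections are charged to whichever token they advance, slow stages (small set differences, many nodes already done) are shown to be compensated by fast stages, and the total charge is $O(k)$ per node spread over the geometric schedule above. This inflates the single-token bound by the stated factor of $k$, giving the claimed $O((k/\alpha)\log^2 n\,\log N\,\log\Delta)$.

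The main obstacle is the combination of the matching-realization lemma with the multi-token amortization. Quantifying exactly how the history-aware, uniformly-random connection choices realize a guaranteed $\Omega(1/\log\Delta)$ fraction of the optimal cut matching --- despite wildly varying degrees, contention at popular receivers, and the shrinking of eligible sets as nodes become done within a phase --- is delicate, and it must be done so that the per-edge successes are independent enough to concentrate. Layered on top, the size band table must amortize the $O(1)$-token bottleneck over all $k$ tokens and all $O(\log n)$ growth stages while losing no more than the advertised logarithmic factors; making the charging scheme balance the genuinely slow stages against the fast ones is where I expect the real work to lie.
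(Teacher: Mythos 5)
Your proposal assembles the right ingredients (Lemma~\ref{lem:msize}, the status coin flips, a PPUSH-style realization of a constant fraction of the cut matching up to $\log$ losses, and the size band table), but the engine you build around them---per-token geometric growth, ``$O((\log\Delta)/\alpha)$ phases to double $|S_t|$ with high probability, then multiply by $k$ for congestion''---is not the paper's argument and has a genuine gap. First, the single-phase guarantee is only a \emph{constant-probability} event, and when the frontier matching is small (e.g., $|M|=O(1)$ early in the execution, or in a poorly connected graph where $\alpha|S_t|$ is constant) there is nothing to concentrate: you cannot upgrade ``$\Omega(m/(\log n\log\Delta))$ connections with constant probability'' to a whp multiplicative increase of $|S_t|$ in a single phase. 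The paper never attempts this; it only counts how many \emph{successful} phases (each succeeding with constant probability) can occur before gossip completes, and converts that count to a whp round bound at the very end via stochastic dominance and one global Chernoff bound over all phases.

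Second, and more fundamentally, when $k>1$ a productive connection across the cut $B(S_t)$ need not transfer token $t$: the connected pair exchanges $O(1)$ tokens from their (possibly large) symmetric difference, so your step ``become connected (hence informed)'' fails for the specific token whose frontier you are tracking. This is exactly why the paper does not track any fixed token's growth. Instead, each phase is credited to $t^*(r_i)$, the token whose cut currently has the largest guaranteed matching, Lemma~\ref{lem:progress} lower-bounds the number of \emph{productive connections of any kind} in that phase by $\Omega(\alpha n^*_{t^*}(r_i)/(\log n\log\Delta))$, and the size band table then amortizes: phases are classified as ``upgrade'' (some token's band moves strictly closer to the center row than the phase's associated band---at most $k(2\log(n/2)+1)$ of these, Lemma~\ref{lem:phase1}) or ``fill,'' and for fill phases the total connection budget chargeable to cells at distance $\ge a_j$ from the center, namely $2k\cdot 2^{\log(n/2)-a_j}$, is divided by the per-phase yield to bound the number of successful fill phases per band by $\Theta((k/\alpha)\log n\log\Delta)$ (Lemma~\ref{lem:phase2}). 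Your use of the band table as a final ``$\times k$ congestion correction'' on top of a single-token doubling schedule does not recover this; the table is the growth argument itself, not an overlay on it. (Minor: the good-orientation probability for a matched edge is $1/2$, not $1/4$, though this only affects constants.)
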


We begin by establishing some preliminary notations and assumptions before continuing to
the main proof argument.

\paragraph{Notation}
For a fixed execution, 
let $Q$ be the non-empty set of $k$ tokens that the algorithm must spread.
For each round $r>0$ and node $u\in V$,
let $T_u(r)$ be the tokens (if any) ``known" by $u$ at the start of round $r$
(that is, the tokens that $u$ starts with as well as every token it received through a connection in rounds $1$ to $r-1$).

For each $t\in Q$, and round $r>0$, let $S_t(r) = \{  v: t\in T_v(r)\}$ be the nodes that know token $t$ at the start of round $r$.
Let $n_t(r) = |S_t(r)|$ be the number of nodes that know token $t$ iat the beginning of this round,
and let $n^*_t(r) = \min\{ n_t(r), n-n_t(r)   \}$.

Finally, let  %$q^*(r) = \max_{t\in Q}\{n^*_t(r)\}$ be the maximum of these values over all $k$ tokens,
 $t^*(r)=\text{argmax}_{t\in Q}\{n^*_t(r)\}$ be a token $t$ with the maximum $n^*_t(r)$ value in this round (breaking ties arbitrarily).
 According to Lemma~\ref{lem:msize},
which connects vertex expansion to matchings,
there is a matching between nodes in $S_{t^*(r)}(r)$ and $V\setminus S_{t^*(r)}(r)$ of
size at least $(\alpha/4)\cdot n^*_{t^*(r)} (r)$.
Token $t^*(r)$, in other words,
has the largest guaranteed potential to spread in round $r$ among all tokens.\footnote{To be slightly more precise,
$(\alpha/4)\cdot n^*_{t^*(r)}(r)$ is a lower bound on the size of the matching across the cut
defined by $t^*(r)$, so $t^*(r)$ is the token with the largest lower bound guarantee on the size of its matching.}
Accordingly, in the analysis that follows,
we will focus on this token in each phase to help lower bound the amount of spreading we hope to achieve.

 \paragraph{Productive Connections and Hash Collisions}
 In the following, we say a given pairwise connection between nodes $u$ and $v$  in some round $r$ is {\em productive} if 
 $T_u(r) \neq T_v(r)$. That is, at least one of these two nodes learns a new token during the connection.
 By the definition of our algorithm, 
 if $u$ and $v$ connect in round $r$,
 then it must be the case that $H(T_u(r),r) \neq H(T_v(r),r)$,
 where $H$ is the hash function used by the random spread gossip algorithm.
 This implies $T_u(r) \neq T_v(r)$---indicating that every connection created by our algorithm is productive.
 
 On the other hand, it is possible for some $u$, $v$, and $r$ that even though $T_u(r) \neq T_v(r)$,
 $H(T_u(r),r) = H(T_v(r),r)$ due to a hash collision.
 For the sake of clarity, in the analysis that follows we assume that no hash collisions occur in the analyzed execution.
 Given the execution length is polynomial in the network size $n$,
 and there are at most $n$ different token sets hashed in each round,
 for standard parameters the probability of a collision among this set would be extremely small, supporting our assumption.
 
 We emphasize, however, that even if a small number of collisions {\em do} occur, 
 their impact is minimal on the performance of random spread gossip.
 The worst outcome of a hash collision in a given round
 is that during that single round a potentially productive connection is not observed to be productive and therefore 
 temporarily ignored. As will be made clear in the analysis that follows, the impact of this event is nominal.
 Indeed,
 even if we assumed that up to a constant fraction of the hashes in every round generated collisions---an extremely unlikely event for all but the weakest hash function parameters---the algorithm's worst case time complexity would decrease
 by at most a constant factor.
  
\paragraph{Matching Phases}
Recall that our algorithm partitions rounds into {\em phases} of length $\lceil \log{N}\rceil$.
For each phase $i>0$,
let $r_i = \lceil \log{N} \rceil\cdot (i-1) + 1$ be the first round of that phase.
Fix some arbitrary phase $i$
and consider token $t=t^*(r_i)$,
which, as argued above, is the token with the largest guaranteed potential to spread
in round $r_i$.
Our goal in this part of the analysis is to prove that with constant probability,
our algorithm will create enough productive connections during this phase to well-approximate
this potential.
This alone is not enough to prove our algorithm terminates efficiently,
as in some phases, it might be the case that {\em no} token has a large potential to spread.
The next part of our argument will tackle this challenge by proving that over a sufficient number
of phases the aggregate amount of progress must be large.

We begin by
establishing the notion of a {\em productive subgraph}:

\begin{definition}
At the beginning of any round $r>0$, we define the {\em productive subgraph} of the network
topology $G=(V,E)$ for $r$ as: $G_r = (V,E_r)$, where
$E_r = \{ \{u,v\}\mid \{u,v\}\in E, T_u(r) \neq T_v(r),u.status(r) \neq v.status(r)\}$,
and for each $w\in V$, $w.status(r)$ indicates the value of the node $w$'s status bit 
for the phase containing round $r$. 
\end{definition}

That is, the productive subgraph for round $r$ is the subgraph of $G$ that contains
only edges where the endpoints: (1) have different token set; and (2) have
different statuses (one is a sender during this phase and one is a receiver).
This subgraph contains every possible connection for a given round of our gossip algorithm
(we ignore $done$ flags because, as will soon be made clear,
we consider these graphs defined only for the first round of phases, a point at
which all $done$ flags are reset to $0$).
Accordingly, a maximum matching on this subgraph upper bounds the maximum number
of concurrent connections possible in a round.

We begin by lower bounding the size of the maximum matching in a productive subgraph at the beginning of a given phase $i$
using the token $t=t^*(r_i)$. Recall that $n^*_t(r_i)$ is the number of nodes that know token 
$t$ at the beginning of $r$, if less than half know the token, and otherwise indicates
the number of nodes that do not know $t$.

\begin{lemma}
Fix some phase $i$.
Let $t= t^*(r_i)$.
Let $G_{r_i}$ be the productive subgraph for round $r_i$,
 $M_{i}$ be a maximum matching on $G_{r_i}$, and $m_i = |M_{i}|$.
With constant probability (defined over the $status$ assignments):
$m_i \geq (\alpha/16)n^*_t(r_i)$. 
\label{lem:prod}
\end{lemma}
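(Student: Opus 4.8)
The plan is to start from the matching that Lemma~\ref{lem:msize} already guarantees across the token cut, and show that a random constant fraction of it survives into the productive subgraph after the random status assignment. Concretely, write $S = S_t(r_i)$ for the set of nodes knowing $t = t^*(r_i)$ at the start of the phase. As recorded in the preceding notation (an application of Lemma~\ref{lem:msize} to the partition defined by $S$), the bipartite graph $B(S)$ admits a matching $M^*$ with $M := |M^*| \geq (\alpha/4)\,n^*_t(r_i)$. Every edge $\{u,v\}\in M^*$ crosses the cut, so exactly one endpoint knows $t$; hence $T_u(r_i)\neq T_v(r_i)$, and condition~(1) in the definition of $G_{r_i}$ holds deterministically for each edge of $M^*$.

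The only remaining requirement for such an edge to appear in $E_{r_i}$ is condition~(2): that its two endpoints receive different status bits in phase $i$. Since each node sets $status$ with an independent fair coin, a fixed edge has endpoints of differing status with probability exactly $1/2$. The key structural point is that the edges of $M^*$ are vertex-disjoint, so these survival events depend on disjoint pairs of coins and are therefore mutually independent. Let $X$ be the number of edges of $M^*$ whose endpoints are assigned distinct statuses; then $X \sim \mathrm{Bin}(M, 1/2)$. Each surviving edge lies in $E_{r_i}$, and surviving edges remain pairwise disjoint, so they form a matching in $G_{r_i}$; consequently $m_i \geq X$.

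It then suffices to lower bound $\Pr[X \geq (\alpha/16)\,n^*_t(r_i)]$. Because $M \geq (\alpha/4)\,n^*_t(r_i)$ we have $M/4 \geq (\alpha/16)\,n^*_t(r_i)$, so it is enough to prove $\Pr[X \geq M/4] \geq 1/2$. This follows from the symmetry of $\mathrm{Bin}(M,1/2)$ about its mean $M/2$: one checks $\Pr[X \geq M/2] \geq 1/2$ for every $M \geq 1$, and since $M/4 \leq M/2$ this yields $\Pr[X \geq M/4] \geq 1/2$. Combining the inequalities, with probability at least $1/2$ over the status assignment we obtain $m_i \geq X \geq M/4 \geq (\alpha/16)\,n^*_t(r_i)$, which is the claim.

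I do not expect a genuine obstacle in this argument; it is essentially a bookkeeping of two factor-of-two losses — one from the probability-$1/2$ status match, one from the concentration slack — which is precisely what turns the $\alpha/4$ of Lemma~\ref{lem:msize} into $\alpha/16$. The two points that require care are the independence claim, which rests entirely on $M^*$ being a matching so that distinct edges use disjoint status coins, and the need for the tail bound to hold uniformly down to $M = 1$ (the regime of small cuts or nearly-finished tokens). The latter is why I favor the exact symmetry/median argument for $\mathrm{Bin}(M,1/2)$ over a pure Chernoff lower-tail bound, which would degrade for the smallest values of $M$; a Chernoff bound can instead be invoked as a remark to show the success probability in fact tends to $1$ as the cut grows.
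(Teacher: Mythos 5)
Your proposal is correct and follows essentially the same route as the paper: both start from the cut matching of size at least $(\alpha/4)\,n^*_t(r_i)$ guaranteed by Lemma~\ref{lem:msize}, observe that the vertex-disjointness of the matching makes the per-edge status-survival events independent, and conclude that a quarter of the edges survive into $G_{r_i}$ with constant probability. The only difference is the tail bound: the paper applies a multiplicative Chernoff bound (yielding a worst-case success probability of only about $0.06$, which it concedes in a footnote is loose), whereas your symmetry/median argument for $\mathrm{Bin}(M,1/2)$ gives a clean success probability of at least $1/2$ uniformly down to $M=1$ --- a minor but genuine tightening.
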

\begin{proof}
Fix some phase $i$.
We define $G'_{r_i}$ to be the {\em potentially productive} subgraph for round $r_i$,
where {\em potentially productive} is defined the same as {\em productive} except
we {\em omit} the requirement that endpoints of edges in the graph have different $status$ values.
 Let $M'$ be a maximum matching on $G'_{r_i}$ and $m' = |M'|$.
 We will reason about $m'$ as an intermediate step toward bounding the size of the actual productive subgraph
 for this round.

Let $t=t^*(r_i)$.
Consider the cut between nodes that know $t$, and nodes that do not,
 at the beginning of this phase.
 %; i.e.,
%the cut defined by 
% $Q = V \setminus S_t(r_i)$.
By Lemma~\ref{lem:msize}, 
there is a matching across this cut of size at least $(\alpha/4)n^*_t(r_i)$.
By definition, for all edges across this cut, their endpoints have different
token sets at the beginning of round $r_i$, 
therefore they are all candidates to be included in $M'$,
implying that $m' \geq  (\alpha/4)n^*_t(r_i)$.

Our next step is to consider the random assignment of sender and receiver status to nodes in $M'$
at the beginning of phase $i$.
For an edge in $M'$ to be included in a matching on the productive subgraph $G_{r_i}$,
 it must be the case that 
one endpoint chooses to be a receiver while the other chooses to be a sender.
We call such an edge {\em good}.
For any particular edge $e \in M'$,
this occurs with probability $1/2$. 

For each such $e\in M'$, 
 let $X_e$ be the random indicator that evaluates to $1$ if $e$ is good, and evaluates to $0$ otherwise.
 Let $Y=\sum_{e\in M'} X_e$ be the number of good edges for this phase.
By our above probability calculation, we know:

\[ E[Y] = E\left[ \sum_{e\in M'} X_e \right] = \sum_{e\in M'} E[X_e] = m'/2. \]

Because $M'$ is a matching, these indicator variables are independent. 
This allows us to concentrate on the mean.
In particular,
we will apply the following multiplicative Chernoff Bound, 
defined for $\mu = E[Y]$ and any $0\leq \delta \leq 1$:

\[ \Pr(Y \leq (1-\delta)\mu) \leq e^{- \frac{\delta^2\mu}{2}}, \]

\noindent with $\delta=1/2$, to establish that the probability that $Y \leq m'/4$ is upper bounded by:

\[  e^{- \frac{\mu}{8}} = e^{- m'/16   } < .94. \]

It follows that $Y$ is less than or equal to  $m'/4$, which is itself greater
than or equal to $(\alpha/16)n^*_t(r_i)$
with a probability upper bounded by a constant---as required.\footnote{Clearly,
 the specific worst failure bound of $0.06$ is loose (in the worst case, where $m_i=1$, for example,
we can directly calculate that $Y=m_i$ with probability $1/2$). We are not,
however, attempting to optimize constants in this analysis,
so any constant bound is sufficient
for our purposes.}
\end{proof}

We now turn our attention to our gossip algorithm's ability to take advantage
of the potential productive connections captured by the productive subgraph defined
at the beginning of the phase. 
To do so, we first adapt a useful result on rumor spreading from~\cite{ghaffari:2016}
to the behavior of our gossip algorithm.
Notice that it is the proof of the below adapted lemma that requires
the use of the $done$ flag in our algorithm.

\begin{lemma}[adapted from Theorem 7.2 in~\cite{ghaffari:2016}]
Fix a phase $i$.
Let $G'$ be a subgraph of the productive subgraph $G_{r_i}$ that satisfies the following:
\begin{enumerate}
\item there is a matching of size $m$ in $G'$;
\item the set $L$ of nodes in $G'$ with sender status is of size $m$; and
\item for each node $u\in L$, every neighbor of $u$ in $G_{r_i}$ is in $G'$. 
\end{enumerate}
With constant probability (defined over the random neighbor choices), 
during the first $\log{\Delta}$ rounds of phase $i$,
at least $\Omega\left(\frac{m}{\log{n}\log{\Delta}}\right)$ neighbors of nodes in $L$ in $G'$
participate in a productive connection.
\label{lem:match} 
\end{lemma}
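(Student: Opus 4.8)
The plan is to recast the phase dynamics as a balls-into-bins process and to extract progress through a two-step reduction: first a degree-bucketing step that costs a $\log\Delta$ factor, and then a contention analysis of random proposals that costs the remaining $\log n$ factor. Observe first that since $N \geq \Delta$ we have $\lceil \log N \rceil \geq \log\Delta$, so the $\log\Delta$ rounds named in the statement genuinely lie inside phase $i$; and that every connection formed inside $G'$ is automatically productive (its endpoints have different token sets), so it suffices to count the receivers, i.e.\ the neighbors of $L$ in $G'$, that accept \emph{any} invitation during these rounds.

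\emph{Reduction to a single degree class.} I would partition the senders in $L$ by their degree in $G'$: for $j=0,1,\dots,\lceil\log\Delta\rceil$ let $L_j$ collect the senders whose $G'$-degree lies in $[2^j,2^{j+1})$. Since there are only $O(\log\Delta)$ classes and $|L|=m$, averaging yields a class $L^\ast$ with $|L^\ast| = m^\ast = \Omega(m/\log\Delta)$, all of whose senders have degree within a factor two of a common value $d$. Restricting the matching of condition~(1) to the edges incident to $L^\ast$ gives a matching $M^\ast$ of size $m^\ast$ pairing the senders of $L^\ast$ with $m^\ast$ \emph{distinct} receivers. Condition~(3) is what makes this restriction clean: every eligible neighbor a sender of $L^\ast$ might invite already lies in $G'$, so the entire proposal process stays inside the analyzed subgraph and no invitation is ``lost'' to a node outside our accounting.

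\emph{Per-round contention and aggregation.} In each of the $\log\Delta$ rounds every sender of $L^\ast$ invites a uniformly random eligible neighbor, hitting any fixed neighbor with probability $\Theta(1/d)$. Modeling invitations as balls and receivers as bins, the probability that a still-unsatisfied receiver $v$ gets at least one invitation in a round is $1-(1-\Theta(1/d))^{s(v)}$, which up to constants is at least $\min\{1,\,s(v)/d\}$, where $s(v)$ counts the not-yet-$done$ senders of $L^\ast$ adjacent to $v$. The crucial point is that the existence of $M^\ast$ (condition~(1)) prevents the sender-edges from concentrating on a small receiver set: by a K\"onig/Hall-type argument the edges incident to $L^\ast$ must reach at least $m^\ast$ distinct receivers, which lower-bounds $\sum_v \min\{1,\,s(v)/d\}$ and hence the expected number of receivers hit per round. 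The $done$ flag lets this progress compound: once a receiver accepts it advertises $done=1$ and is removed as a target, so senders whose partners are already satisfied re-randomize over the remaining eligible receivers rather than re-hitting saturated ones. I would set this up as an amortized argument showing that, while fewer than the target number of receivers have been connected, each round still makes expected progress proportional to $m^\ast/\mathrm{polylog}$; summing over the $\log\Delta$ rounds gives an expected $\Omega(m^\ast/\log n) = \Omega(m/(\log n\log\Delta))$ distinct productive connections. Finally, since the senders' choices within a round are independent, a multiplicative Chernoff bound (exactly as in the $X_e$ argument in the proof of Lemma~\ref{lem:prod}) converts this expectation into the claimed constant-probability guarantee.

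\emph{Main obstacle.} The delicate step is the contention bound. Counting only ``matched'' proposals, i.e.\ each sender hitting its own partner, yields merely $\Omega(m^\ast/d)$ connections per round, which is useless when $d$ is large, so one cannot ignore the help a popular receiver gets from non-partner senders. The heart of the argument, inherited from Theorem~7.2 of~\cite{ghaffari:2016}, is precisely to use the guaranteed matching of size $m^\ast$ to rule out pathological concentration of the invitation mass, and to interlock this with the $done$-flag dynamics so that the limited budget of $\log\Delta$ rounds suffices to accumulate $\Omega(m^\ast/\log n)$ distinct productive connections. I expect essentially all of the technical effort to lie in making this concentration-versus-matching bound quantitative and in verifying that removing satisfied receivers keeps the per-round expected progress from stalling before the target count is reached.
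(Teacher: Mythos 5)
Your proposal takes a genuinely different route from the paper. The paper does not reprove this bound at all: its ``proof'' consists of proof notes establishing a behavioral congruence between the random spread gossip algorithm and the PPUSH rumor-spreading process already analyzed in Theorem~7.2 of~\cite{ghaffari:2016} --- senders are mapped to informed nodes, receivers to uninformed nodes, and the $done$ flag to the PPUSH rule that a node stops being a target once informed --- together with the observation that any subgraph of a productive subgraph is bipartite (senders on one side, receivers on the other), which is the hypothesis the cited theorem needs. The bound $\Omega\left(\frac{m}{\log{n}\log{\Delta}}\right)$ is then imported wholesale. You instead sketch the internals of that cited theorem: degree-bucketing the senders into $O(\log\Delta)$ classes, a balls-into-bins contention analysis in which the guaranteed matching rules out pathological concentration of invitation mass, and the $done$-flag dynamics that let progress compound over the $\log\Delta$ rounds. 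This is a faithful reconstruction of the shape of the underlying argument, and it buys self-containedness where the paper buys brevity; your explicit identification of the contention-versus-matching bound as the technical heart is exactly right. Two cautions if you were to complete it: the events ``receiver $v$ is hit'' for distinct $v$ are not independent (senders' choices are independent, but the receiver-side indicators are only negatively correlated), so the final concentration step needs negative association, a second-moment argument, or a reverse-Markov bound rather than a plain multiplicative Chernoff bound applied as in Lemma~\ref{lem:prod}; and the averaging step that extracts a single degree class $L^\ast$ must be interleaved carefully with the round-by-round shrinkage of eligible receivers, since a sender's effective degree changes as its neighbors set $done=1$.
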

 \begin{proof}[Proof Notes]
The original version of this theorem from~\cite{ghaffari:2016} 
requires that $G'$ is a bipartite graph. 
This follows in our case because it is a subset of a productive subgraph.
All subsets of productive subgraphs are bipartite as you can put the nodes
with sender status in one bipartition and nodes with receiver status in the other
(by definition the only edges in a productive subgraph are between sender and receiver nodes).

Another difference is that our theorem studies our gossip
algorithm, while the theorem from~\cite{ghaffari:2016} studies the PPUSH rumor spreading process.
The PPUSH process assumes  a single rumor spreading in the system.
Some nodes know the rumor (and are called {\em informed}) and some nodes do not (and are called {\em uninformed}).
In each round, each node declares whether or not they are uninformed.
Each informed node randomly chooses an uninformed neighbor (if any such neighbors exist)
and tries to form a connection, changing the receiver's status to informed.

The original version of the theorem states that if you execute PPUSH for $\log{\Delta}$
rounds, at least $\Omega(\frac{m}{\log{n}\log{\Delta}})$ nodes that neighbor $L$ are informed.
If we consider senders to be informed and receivers to be uninformed, 
our gossip algorithm behaves the same as PPUSH in $\log{\Delta}$ rounds under consideration.
That is, the senders in $L$ will randomly select a receiver neighbor to attempt a connection.

Once a receiver in $G'$ participates in a connection in our algorithm,
it sets its $done$ flag to $1$ for the remainder of the phase,
preventing future attempts to connect to it during the phase.
This matches the behavior in PPUSH where once a node becomes informed,
informed neighbors stop trying to connect to it.
This congruence allows us to derive the same $\Omega(\frac{m}{\log{n}\log{\Delta}})$ bound
derived for PPUSH in~\cite{ghaffari:2016}.
 \end{proof}
 
 We now combine Lemmas~\ref{lem:prod} and~\ref{lem:match}
 to derive our main result for this part of the analysis.

\begin{lemma}
Fix some phase $i$.
Let $t= t^*(r_i)$.
With constant probability,
the number of productive connections in this phase is in
$\Omega\left(\frac{\alpha n^*_t(r_i)}{\log{n}\log{\Delta}}\right)$.
\label{lem:progress}
\end{lemma}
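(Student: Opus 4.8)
The plan is to chain together Lemmas~\ref{lem:prod} and~\ref{lem:match}: the former guarantees a large matching in the productive subgraph at the start of the phase, and the latter converts that matching into a comparably large number of productive connections realized over the first $\log{\Delta}$ rounds. The crucial feature enabling a clean combination is that the two lemmas draw on disjoint sources of randomness---Lemma~\ref{lem:prod} over the $status$ assignments fixed at the start of the phase, and Lemma~\ref{lem:match} over the random neighbor selections made during the phase. I would therefore condition on the event from the first lemma and invoke the second on the residual randomness, multiplying the two constant probabilities to obtain a single constant-probability guarantee.

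First I would apply Lemma~\ref{lem:prod} with $t = t^*(r_i)$ to obtain, with constant probability over the status assignments, a maximum matching $M_i$ on the productive subgraph $G_{r_i}$ of size $m_i = |M_i| \geq (\alpha/16)\, n^*_t(r_i)$. Conditioning on this event, I would then build a subgraph $G'$ of $G_{r_i}$ meeting the three hypotheses of Lemma~\ref{lem:match} with parameter $m = m_i$. The key observation is that $G_{r_i}$ is bipartite between sender- and receiver-status nodes (every productive edge joins a sender to a receiver), so each edge of $M_i$ has exactly one sender endpoint. Let $L$ be the set of these $m_i$ sender endpoints, and let $G'$ be induced by $L$ together with all of their $G_{r_i}$-neighbors (all of which are receivers). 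Then $G'$ contains no senders outside $L$, so the set of sender nodes in $G'$ is exactly $L$ with $|L| = m_i$ (hypothesis 2); every $G_{r_i}$-neighbor of a node in $L$ lies in $G'$ (hypothesis 3); and since each edge of $M_i$ joins an $L$-node to one of its neighbors, $M_i \subseteq G'$, giving a matching of size $m_i$ (hypothesis 1).

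Next I would invoke Lemma~\ref{lem:match} on $G'$ with $m = m_i$: with constant probability over the neighbor choices, during the first $\log{\Delta}$ rounds of the phase at least $\Omega\left(\frac{m_i}{\log{n}\log{\Delta}}\right)$ neighbors of $L$ in $G'$ participate in a productive connection. Substituting $m_i \geq (\alpha/16)\, n^*_t(r_i)$ gives $\Omega\left(\frac{\alpha\, n^*_t(r_i)}{\log{n}\log{\Delta}}\right)$ such neighbors. Finally I would translate this count of participating receivers into a count of distinct productive connections: each such receiver is a distinct node connected to some sender, and because a receiver sets its $done$ flag after its first connection, each participates in exactly one connection and no two share a connection. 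Thus the number of distinct productive connections in the phase is at least the number of participating receivers, namely $\Omega\left(\frac{\alpha\, n^*_t(r_i)}{\log{n}\log{\Delta}}\right)$, as claimed.

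The main obstacle I anticipate is the construction of $G'$ rather than any calculation: one must verify simultaneously that $L$ has size exactly $m_i$, that the neighborhood-closure hypothesis holds, and that the matching of size $m_i$ survives inside $G'$, all while keeping the status randomness and the neighbor-choice randomness cleanly separated so the conditioning argument is valid. Bipartiteness of the productive subgraph is what makes all three structural conditions fall out at once, and the $done$-flag mechanism is what guarantees the counted connections are distinct; pinning down these two facts is the crux of the argument.
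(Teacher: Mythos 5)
Your proposal is correct and follows essentially the same route as the paper's own proof: apply Lemma~\ref{lem:prod} to get the matching $M_i$ of size $m_i \geq (\alpha/16)\,n^*_t(r_i)$, build $G'$ from the sender endpoints of $M_i$ and all of their receiver neighbors, invoke Lemma~\ref{lem:match} with $m = m_i$, and multiply the two constant probabilities. Your additional care in verifying the three hypotheses of Lemma~\ref{lem:match} and in converting participating receivers into distinct connections via the $done$ flag only makes explicit what the paper leaves implicit.
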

 \begin{proof}
 Fix some phase $i$.
 By Lemma~\ref{lem:prod},
 with some constant probability $p_1$,
 the productive subgraph $G_{r_i}$ has a matching
 $M_i$ of size $m_i \geq (\alpha/16)n^*_t(r_i)$
 once nodes randomly set their $status$ flags.
 
 Now consider the subgraph graph $G'$ that consists of every sender
 endpoint in $M_i$,
 and for each such sender $u$,
 every receiver $v$ that neighbors $u$,
 as well as the edge $\{u,v\}$.
 This subgraph satisfies the conditions of Lemma~\ref{lem:match} for $m=m_i$.
 Applying this lemma, it follows that with some constant probability $p_2$,
 during this phase, the random neighbor selections by senders
 will generate at least $\Omega(m_i/(\log{n}\log{\Delta}))$
 productive connections.
 
 Combining these two results,
 we see that with constant probability $p=p_1p_2$,
 we have at least $\Omega(\frac{\alpha n^*_t(r_i)}{\log{n}\log{\Delta}})$ productive connections, 
 as claimed by the lemma statement.
 \end{proof}

\paragraph{The Size Band Table}
In the previous part of this analysis, we
proved that with constant probability the number of productive connections in phase
$i$ is bounded with respect to the number of nodes that know $t^*(r_i)$.
In the worst case, however, $t^*(r_i)$ might be quite small (e.g., at the beginning
of an execution where each token is known by only a constant number of nodes,
this value is constant). We must, therefore, move beyond a worst-case application
of Lemma~\ref{lem:progress}, and amortize the progress over time to something more
substantial.

To accomplish this goal, we introduce a data structure---a tool used only in the context of our analysis---that we
call a {\em size band table}, which we denote as ${\cal S}$.
This table has one column for each token $t\in T$,
and $2\log{(n/2)}+1$ rows which we number $1,2,...,2\log{n/2}+1$.

As we will elaborate below, each row is associated with a range of values that
we call a {\em band}.
We call rows $1$ through $\log{(n/2)}$  {\em growth bands},
and rows $\log{(n/2)} + 1$ through $2\log{(n/2)}+1$ {\em shrink bands}.
Each cell in ${\cal S}$  contains a single bit.
We update these bit values after every round of our gossip algorithm
to reflect the extent to which each token has spread in the system.

In more detail, for each round $r \geq 1$,
we use ${\cal S}_r$ to describe the size band table at the beginning of round $r$.
For each token $t\in T$ and row $i, 1 \leq 1 \leq 2\log{(n/2)}+1$,
we use ${\cal S}_r[t,i]$ to refer to the bit value in row $i$
of the column dedicated to token $t$ in the table for round $r$.

Finally, we define each of these bit values as follows.
For each round $r \geq 1$,
token $t\in T$, and
growth band $i$ (i.e., for each $i, 1 \leq i \leq \log{(n/2)}$),
we define:

\[
{\cal S}_r[t,i] = \begin{cases}
			%1 & \text{if at least $2^i$ nodes know token $t$ at the beginning of round $r$} \\
			1 & \text{if at least $2^i$ nodes know}\\ 
			    & \text{token $t$ at the beginning of round $r$,} \\
			0 & \text{else.}
			\end{cases}
\]

Symmetrically,
for each round  $r \geq 1$,
token $t\in T$, and
shrink band $i$ (i.e., for each $i, \log{(n/2)} + 1 \leq i \leq 2\log{(n/2)} + 1$),
we define:

\[
{\cal S}_r[t,i] = \begin{cases}
			1 & \text{if less than $\frac{n}{2^{i-\log{(n/2)}}}$ nodes do {\em not}}\\
			    & \text{know token $t$ at the beginning of round $r$,} \\
			0 & \text{else.}
			\end{cases}
\]

A key property of the side band table is that as a given token $t$ spreads,
the cells in its column with $1$ bits grow from the smaller rows toward the larger rows. 
That is, if row $i$ is $1$ at the beginning of a given round,
all smaller rows for that token are also $1$ at the beginning of that round.
Furthermore, because nodes never lose knowledge of a token, once a cell is set
to $1$, it remains $1$.

\begin{figure}[htbp]
\centerline{\includegraphics[width=0.45\textwidth]{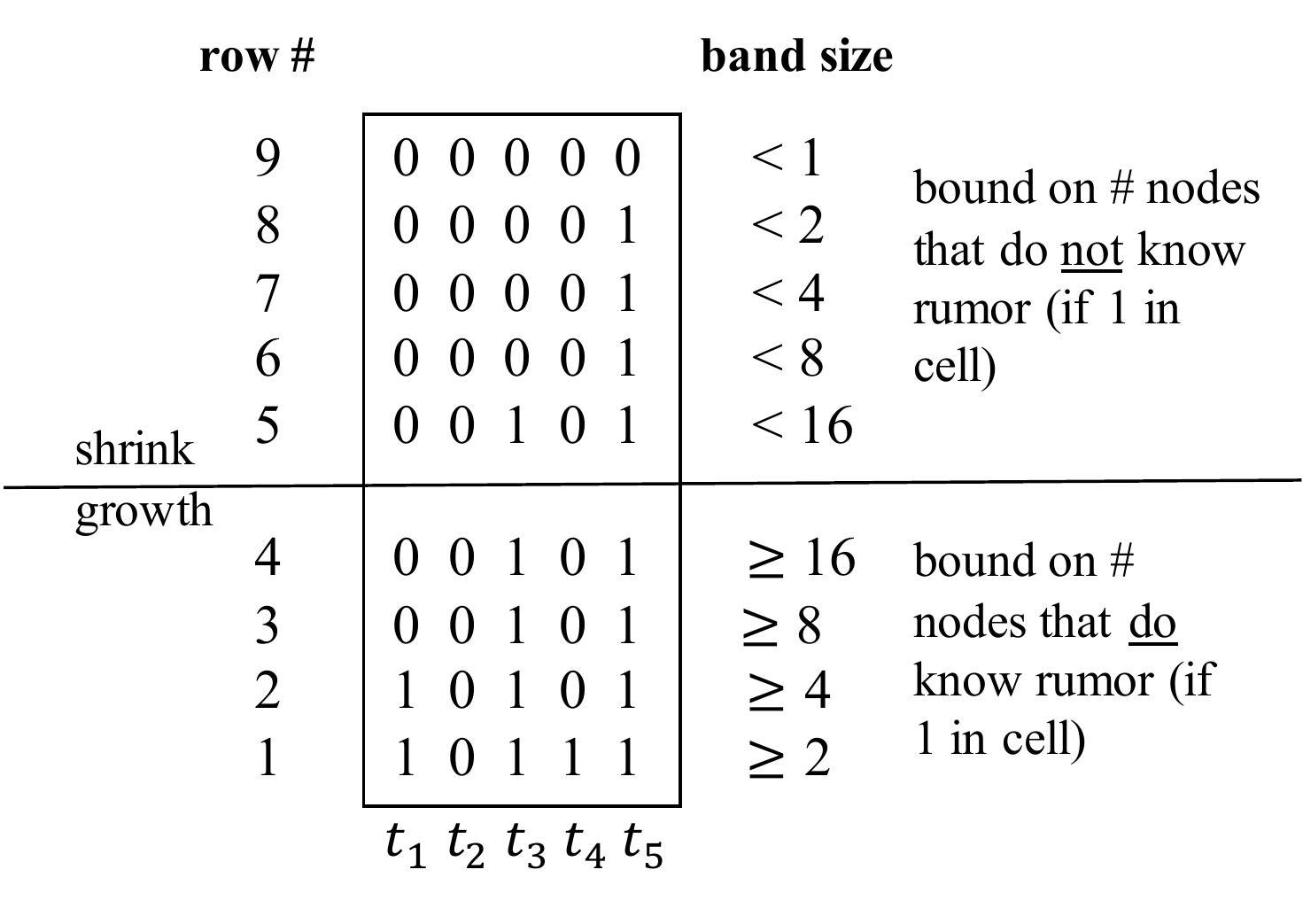}}
\caption{An example size band table for token set $T=\{t_1,t_2,t_3,t_4,t_5\}$ and network size $n=32$.
There is one column for each token. The largest row containing a $1$ for a given token
bounds the token spread.
Token $t_1$, for example, has spread to at least $4$ out of the $32$ nodes,
while token $t_5$ is known to all but $1$ node (indicating that it has spread to at least $31$).
In this example table, token $t_3$, which is spread to somewhere between $16$ to $24$ nodes,
 has the biggest potential to spread in the current round}
\label{fig}
\end{figure}

When all rows for a given token $t$ are set to $1$, it follows that
all nodes know $t$.
This follows because the definition of shrink band $i=2\log{(n/2)}+1$ being set to $1$ 
is that the number of nodes that do {\em not} know $t$ is strictly {\em less}
than:

\begin{eqnarray*}
\frac{n}{2^{i-\log{(n/2)}}} & = & \frac{n}{2^{2\log{(n/2)} + 1 -\log{(n/2)}}}  \\
 &=&  
\frac{n}{2^{\log{(n/2)} + 1}} \\
&=& \frac{n}{2^{\log{(n/2)}} \cdot 2^1} \\
&=& 1.
\end{eqnarray*}

\paragraph{Amortized Analysis of Size Band Table Progress}
As the size band table increases the number of $1$ bits,
we say it {\em progresses} toward a final state of all $1$ bits.
Here we perform an amortized analysis of size band table progress.

To do so, we introduce some notation.
For each phase $i$,
and token $t\in T$,
let $b_t(i)$ be the largest row number that contains a $1$ in $t$'s column in ${\cal S}_{r_i}$.
We call this the {\em current band} for token $t$ in phase $i$.

Let $a(i) = |b_{t^*(r_i)}(i) - \log{(n/2)}|$ define the distance from 
the current band of token $t^*(r_i)$ to the center row number $\log{(n/2)}$.
By the definition of $t^*(r_i)$,
no token has a current band closer to $\log{(n/2)}$ than $t^*(r_i)$ at the start of phase $i$.
We say that phase $i$ is {\em associated} with the current band for $t^*(r_i)$.

Finally, for a given phase $i$, with $t=t^*(r_i)$,
we say this phase is {\em successful} if the number of productive connections
during the phase is at least as large as the lower bound specified by Lemma~\ref{lem:progress};
i.e., there are at least $\frac{\gamma \alpha n^*_{t}(r_i)}{\log{n}\log{\Delta}}$ productive connections.
where $\gamma>0$ is the constant hidden in the asymptotic bound in the lemma statement.

%
%It is important to notice the relationship between $n^*_{t}(r_i)$ and the current band for $t$.
%By definition, $n^*_{t}(r_i) = \Theta(2^{\log{(n/2)} - a(i)})$.
% \note{Do a really good diagram in power point of size band table progressing so that this can be more intuitive for the reader.}

Our first goal in this part of the analysis,
 is to bound the number of successful phases that can be associated with each band.
To do so, we differentiate between two different types of successful phases,
and then bound each separately.

\begin{definition}
Fix some phase $i$ that is associated with some band $j$ at distance $a(i)$ from the center of the size band table.
We say phase $i$ is an {\em upgrade} phase if there exists a subset of the productive connections during phase $i$
that push some token $t$'s current band to a position $j'$ with $|\log{(n/2)} - j'| < a(i)$.
If a phase is not an upgrade phase,
and at least one node is missing at least one token, we call it a {\em fill} phase.
\end{definition}

Stated less formally, we call a phase an upgrade phase if it pushes some token's count closer to the center of the size band
table---row $\log{(n/2)}$---than the band associated with the phase. Our definition is somewhat subtle in that it must handle the case where during a phase a token count
does grow to be closer to the center of the size band table, but then its count continues to grow until it pushes {\em more than} distance $a(i)$ above
the center. We still want to count this as an upgrade phase (hence the terminology about there existing some {\em subset} of the connections
that push the count closer).

Our goal is to bound the number of successful phases possible before all tokens are spread.
We begin with bound on upgrade phases (which hold whether or not
the phase is successful). Our subsequent bound on fill phases, however, considers only successful phases.

\begin{lemma}
There can be at most $k(2\log{(n/2)} + 1)$ upgrade phases.
\label{lem:phase1}
\end{lemma}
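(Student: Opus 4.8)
The plan is to set up an injective charging argument from upgrade phases into the set of (token, row) pairs of the size band table, of which there are exactly $k(2\log{(n/2)}+1)$. The single structural fact I would lean on throughout is that every token's current band is monotone non-decreasing across phases: since a cell of ${\cal S}$ never changes from $1$ back to $0$, the largest $1$-row $b_t(\cdot)$ in each column $t$ can only grow over time.

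First I would argue that an upgrade phase strictly advances the band of whatever token it upgrades. Fix an upgrade phase $i$ and let $t$ be a token whose current band is pushed to some position $j'$ with $|\log{(n/2)}-j'| < a(i)$. By the definition of $t^*(r_i)$, the quantity $a(i)$ is the minimum over all tokens of the distance of the current band from the center, so in particular $|b_t(i)-\log{(n/2)}| \geq a(i) > |j'-\log{(n/2)}|$. Combined with monotonicity, which forces $j' \geq b_t(i)$ once $j'$ is reached during the phase, this rules out $j' = b_t(i)$ and yields $j' > b_t(i)$. Hence the current band of $t$ strictly increases during phase $i$, and the band $b_t(i+1)$ at the start of the next phase satisfies $b_t(i+1) \geq j' > b_t(i)$. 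This also disposes of the overshoot subtlety flagged after the definition: even if $t$'s count grows past distance $a(i)$ on the far side of the center, $b_t(i+1)$ is still a genuinely new (strictly larger) row for $t$.

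Next I would define the charging map. To each upgrade phase $i$, assign the pair $(t, b_t(i+1))$, where $t$ is a token that phase $i$ upgrades (one exists by the definition of an upgrade phase) and $b_t(i+1)$ is that token's current band at the start of phase $i+1$. I would then verify this map is injective. Suppose two upgrade phases $i < i'$ were assigned the same pair $(t,b)$; then both upgrade $t$, so $b_t(i+1) = b_t(i'+1) = b$, while the previous step gives $b_t(i'+1) > b_t(i')$. Since $i+1 \leq i'$, monotonicity across phases gives $b_t(i') \geq b_t(i+1) = b$, so $b_t(i'+1) > b$, contradicting $b_t(i'+1) = b$. Hence no pair is charged twice.

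Finally, the codomain has size $k(2\log{(n/2)}+1)$: there are $k$ token columns, and each value $b_t(i+1)$ is one of the $2\log{(n/2)}+1$ row indices. Injectivity of the charging map then bounds the number of upgrade phases by $k(2\log{(n/2)}+1)$, as claimed. The only delicate point in the whole argument is the first step, namely reconciling the definition of an upgrade (phrased in terms of distance to the center, with possible overshoot) with the clean monotone statement that the upgraded token occupies a strictly larger row than before; once that is pinned down, the counting is immediate.
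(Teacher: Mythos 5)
Your proof is correct. It follows the same counting skeleton as the paper's argument---charge each upgrade phase injectively to one of the $k(2\log{(n/2)}+1)$ token--band pairs---but the charging map itself is different. The paper associates phase $i$ with the band $j=b_{t^*(r_i)}(i)$ and charges the phase to the pair (band $j$, token $t$ that caused the upgrade); it must then argue, via a case split on whether $j$ is a growth band or a shrink band, that token $t$ can never again cause an upgrade for a phase associated with that same band $j$. You instead charge phase $i$ to the pair $(t,\,b_t(i+1))$, and injectivity follows from a single clean fact: an upgrade strictly increases the upgraded token's current band, which combined with monotonicity across phases forbids two upgrade phases for $t$ from landing on the same row. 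Your route sidesteps the growth/shrink case analysis entirely and deals with the overshoot subtlety in the definition more explicitly (via $b_t(i+1)\geq j' > b_t(i)$), at the small cost of having to invoke the minimality of $a(i)$ over all tokens to rule out $j'=b_t(i)$. Both arguments are sound and yield the identical bound.
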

\begin{proof}
Fix some band $j$.
Consider an upgrade phase $i$ that is associated with $j$.
By the definition of an upgrade phase, there is some token $t$ with a current band at the start of $i$
that is distance at least $a(i)$ from the center of table, but that has its count grow closer to the center
during the phase. 

We note that it must be the case that $t$'s current band at the start of phase $i$ is a growth band.
This holds because if $t$'s current band is a shrink band then additional spreading of token $t$ can
only {\em increase} its distance from the center of the size band table.

If $j$ is a growth band, then it follows that $t$'s current band starts phase $i$ no larger than $j$ and
ends phase $i$ larger, because current bands for a token never decrease.
Moving forward, therefore, token $t$ can never again be the cause of a phase associated with band $j$ to be categorized as an upgrade phase.

On the other hand, if $j$ is a shrink band,
we know that after phase $i$, 
token $t$'s distance will remain closer to the center of the table than $j$ until $t$'s current band becomes a shrink band.
Once again, therefore, moving forward token $t$ can never again cause a phase associated with band $j$ to be categorized as an upgrade.

The lemma statement follows as
there are  $2\log{(n/2)} + 1$ bands,
and for each band, each of the $k$ tokens can transform that band into an upgrade phase at most once.
\end{proof}

We now bound the number of successful fill phases.
%This argument captures one of the main uses of the size band table as an analytical tool.
To do so, we note that the number of fill phases associated with a given band is bounded by the
worst case number of connections needed before some token's count must advance past that band.
For bands associated with large ranges this worst case number is large.
As shown in the following lemma, however, the number of connections in phases
associated with large bands grows proportionally large as well.
This balancing of growth required and growth obtained is at the core of our amortized analysis.

\begin{lemma}
There can be at most $O((k/\alpha)\log^2{n}\log{\Delta} )$ successful fill phases.
\label{lem:phase2}  
\end{lemma}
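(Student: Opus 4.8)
The plan is to bound the number of successful fill phases \emph{separately for each band} of the size band table and then sum over the $2\log(n/2)+1 = O(\log n)$ bands. For a fixed distance $a$ from the center row $\log(n/2)$, I will charge the productive connections generated by the successful fill phases associated with a band at distance $a$ against a global ``budget'' of first-time token-learning events, and show that this budget and the per-phase guarantee of Lemma~\ref{lem:progress} both scale as $n/2^a$, so that the $2^a$ factor cancels and each band contributes only $O((k/\alpha)\log n \log \Delta)$ fill phases.

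First I would translate the table position of $t^*(r_i)$ into a lower bound on $n^*_{t^*}(r_i)$. If phase $i$ is associated with a band at distance $a$, then the current band $j$ of $t = t^*(r_i)$ satisfies $|j - \log(n/2)| = a$; unwinding the definitions of the growth and shrink bands shows $n^*_t(r_i) \geq n/2^{a+1}$. Substituting this into Lemma~\ref{lem:progress}, every successful phase associated with a band at distance $a$ contributes at least $\Omega(\alpha n /(2^a \log n \log \Delta))$ productive connections.

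Next I would set up the budget. Call a pair $(u,t')$ a \emph{learning event} if $u$ receives token $t'$ for the first time, and observe that distinct productive connections inject into distinct learning events (each productive connection causes at least one first-time receipt, and a given first-time receipt belongs to exactly one connection). The key refinement is to count only learning events that occur while $t'$ sits at distance $\geq a$ from the center. A token is at distance $\geq a$ exactly during an initial growth interval (fewer than $n/2^a$ nodes know it) together with a final shrink interval (fewer than $n/2^a$ nodes still lack it), so each token admits at most $2n/2^a$ such events, hence at most $kn/2^{a-1}$ across all $k$ tokens. Because a fill phase associated with distance $a$ never pushes any token below distance $a$, and because $t^*$ is the most central token so \emph{all} tokens begin the phase at distance $\geq a$, every productive connection occurring in such a phase charges a distinct learning event drawn from this distance-$\geq a$ budget. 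Dividing $kn/2^{a-1}$ by the per-phase lower bound $\Omega(\alpha n/(2^a \log n \log \Delta))$ cancels the $2^a$ and yields $O(k \log n \log \Delta /\alpha)$ successful fill phases per band; summing over the $O(\log n)$ bands gives the claimed $O((k/\alpha)\log^2 n \log \Delta)$ bound.

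The main obstacle is arranging the charging so the $2^a$ cancellation actually goes through: the per-phase productive-connection guarantee and the per-band learning-event budget must both be governed by the same quantity $n/2^a$. This is what forces the restriction of the budget to learning events at distance $\geq a$ rather than the crude total of $nk$, and it requires verifying two points with care --- that the fill-phase definition genuinely confines every token to distance $\geq a$ for the \emph{entire} phase (using the ``some subset of the connections'' clause in the definition of an upgrade phase), and that the map from productive connections to learning events is an injection so the division is legitimate.
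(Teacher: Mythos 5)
Your proposal is correct and follows essentially the same argument as the paper: the paper also bounds fill phases per band by dividing a per-band budget of productive connections (tracked via a generalized size band table whose cells at distance $\geq a_j$ can absorb at most $O(k\cdot 2^{\log(n/2)-a_j})$ connections, which is exactly your distance-restricted learning-event count) by the per-phase guarantee $\Omega(\alpha 2^{\log(n/2)-a_j}/(\log n\log\Delta))$ from Lemma~\ref{lem:progress}, then sums over the $O(\log n)$ bands. Your ``learning events at distance $\geq a$'' budget is the same object as the paper's cell counts, so the two write-ups differ only in presentation.
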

\begin{proof}
Consider a group of successful fill phases associated with some band $j$ at distance $a_j$ from the center of the size band table.
Because these are fill phases, the productive connections generated during these phases can never
push some token's count (perhaps temporarily) closer than distance $a_j$ from the center of the table
(any phase in which this occurs becomes, by definition, an upgrade phase).

One way to analyze the distribution of the productive connections during these phases is to consider
a generalization of the size band table in which we record in each cell $[t,i]$ the total number of
productive connections that spread token $t$ while its count falls into the band associated with row $i$.
(Of course, many connections for a given token might occur in a given round, in which we case, we process
them one by one in an arbitrary order while updating the cell counts.) 

If we apply this analysis only for the fill phases fixed above, then we know that the counts in all cells of distance less than
$a_j$ from the center of the table remain at $0$.
By the definition of the size band table, 
for a given token $t$,
the maximum number of connections we can add to cells of distance at least $a_j$ from the center
is loosely upper bounded by $2\cdot 2^{\log{(n/2)} - a_j}$ (the extra factor of two captures
both growth and shrink band cells at least distance $a_j$).
Therefore,
the total number of productive connections we can process into cells at distance at least $a_j$
is at most $2k2^{\log{(n/2)} - a_j}$.

By the definition,
each phase $i$ that is a successful fill phase associated with $j$ generates at least $\frac{\gamma \alpha n^*_{t}(r_i)}{\log{n}\log{\Delta}}$
productive connections, where $t=t^*(r_i)$.
By the definition of $t^*(r_i)$, $t$'s current band is distance $a_j$ from the center.
Therefore, $n^*_{t}(r_i)$ is within a factor of $2$ of $2^{\log{(n/2)} - a_j}$.
By absorbing that constant factor into the constant $\gamma$ (to produce a new constant $\gamma'$),
it follows that this phase generates at least 

\[ z = \frac{\gamma' \alpha 2^{\log{(n/2)} - a_j}}{\log{n}\log{\Delta}}\] 

\noindent new productive connections. Combined with our above upper bound on the total possible productive connections for successful
fill phases associated with $j$, it follows that the total number of successful fill phases associated with $j$ as less than:

\begin{eqnarray*}
z^{-1}2k2^{\log{(n/2)} - a_j} &=&  \left(\frac{\log{n}\log{\Delta}}{\gamma' \alpha 2^{\log{(n/2)} - a_j}}\right) 2k2^{\log{(n/2)} - a_j} \\
 &= & \Theta((k/\alpha)\log{n}\log{\Delta}).
\end{eqnarray*}

We multiply this bound over $2\log{(n/2)} + 1$ possible bands 
to derive $O((k/\alpha)\log^2{n}\log{\Delta})$ total
possible successful fill phases,
providing the bound claimed by the lemma statement.
\end{proof}

\paragraph{Pulling Together the Pieces}
We are now ready to combine the above lemmas to prove our main theorem.

\begin{proof}[Proof (of Theorem~\ref{thm:main})]
Combining Lemmas~\ref{lem:phase1} and~\ref{lem:phase2}, it follows that
there can be at most $\ell = k(2\log{(n/2)} + 1) + O((k/\alpha)\log^2{n}\log{\Delta}) = O((k/\alpha)\log^2{n}\log{\Delta})$ successful
upgrade and fill phases before all $k$ tokens are spread.

By Lemma~\ref{lem:progress},
if the token spreading is not yet complete,
then the probability that the current phase is successful is lower bounded by some constant probability $p>0$.
The actual probability might depend on the execution history up until the current phase,
but the lower bound of $p$ always holds, regardless of this history. 
We can tame these dependencies with a stochastic dominance argument.

In more detail, for each phase $i$ before the tokens are spread, we define a trivial random variable $\hat X_i$ that is $1$ with independent
probability $p$, and otherwise $0$.
Let $X_i$, by contrast, be the random indicator variable that is $1$ if phase $i$ is successful, and otherwise $0$.
For each phase $i$ that occurs after the tokens are spread, $\hat X_i = X_i = 1$ by default.

Note that for each $i$, $X_i$ stochastically dominates $\hat X_i$.
It follows that if $\hat Y_T = \sum_{i=1}^T \hat X_i$ is greater than some $x$ with some probability $\hat p$,
then $Y_T = \sum_{i=1}^T X_i$ is greater than $x$ with probability at least $\hat p$.

With this established, consider the first $T = (c/p)\ell$ phases, for some constant $c\geq 2$.
Note that for this value of $T$, $E[\hat Y_T] = c\ell$.
Because $\hat Y_T$ is the sum of independent random variables,
we concentrate around this expectation. 
In particular, we once again apply the following form of a Chernoff Bound:

\[ \Pr(Y \leq (1-\delta)\mu) \leq e^{- \frac{\delta^2\mu}{2}}, \]

\noindent for $Y=\hat Y_T$, $\delta= 1/2$, and $\mu = c\ell$,
to derive that the probability that $\hat Y_T \leq (c/2)\ell \geq \ell$,
is upper bounded by $e^{-\frac{c\ell}{8}}$.
The same bound therefore holds for the probability that $Y_T \leq (c/2)\ell$.
Notice that this error bound is polynomially small in $n$ with an exponent that grows with constant $c$.
It follows, therefore, that with high probability in $n$,
that token spreading succeeds in the first
$T =  (c/p)\ell = O( (k/\alpha)\log^2{n}\log{\Delta} )$ phases.

To achieve the final {\em round} complexity bound claimed by the theorem statement,
we multiply this upper bound on phases by the length of $\log{N}$ rounds per phase.
\end{proof}

%\newpage

\iffalse
\subsection{Comparison to Existing Solution}

We also achieve $\tilde{O}(k/\alpha)$ for stable graphs in PODC 2017.
But the log factor for that algorithm was $\log^6{n}$ (though we should improve to $\log^5{n}$ since we are now assuming $b=\Omega(\log{n})$.)
It was also a more complicated algorithm in which nodes maintain multiple parallel instances,
each attempting to verify a different guess at the size of $k$ by throwing rumors randomly in bins. 
Only once it converges on the right estimate does it then cycle through parallel instances of rumor
spreading. Lots of overhead. Our algorithm is faster and much more natural and simple and easy to implement.

This algorithm is quite well suited to ongoing communication.
You can introduce a new token anywhere and anytime and it will start spreading.
From an analysis perspective, we no longer have to consider tokens that have spread everywhere.
So in an ongoing scenario the relevant $k$ is just those tokens that are still spreading.

Notice, the PODC 2017 algorithm does not support this. It runs for a given set of $k$ tokens and has no mechanism to tell you
when that collection is done. 
\fi
%\input{sim}

\section{Random Gossip in the Asynchronous Mobile Telephone Model}
\label{sec:amtm}

The mobile telephone model captures the basic dynamics of the peer-to-peer libraries included in standard smartphone operating systems.
This abstraction, however, makes simplifying assumptions---namely, the assumption of synchronized rounds.
In this section we analyze the performance of simple random gossip processes in a more realistic version of the model that eliminates the synchronous round assumption.
In particular, we first define the  asynchronous mobile telephone model (aMTM), which describes an event-driven
peer-to-peer abstraction in which an adversarial scheduler controls the timing of key events in the execution.

An algorithm specified in the aMTM should be directly implementable on real hardware without the need to synchronize or simulate rounds.
This significantly closes the gap between theory and practice. 
With this in mind, after defining the aMTM, we specify and analyze a basic random gossip process strategy.
In this more realistic asynchronous model, different processes can be running at vasty different and changing speeds,
invalidating the clean round-based analysis from the previous section.
We will show, however, that even in this more difficult setting, 
random gossip processes can still be analyzed and shown to spread tokens with speed that increases with available connectivity.

\subsection{The Asynchronous Mobile Telephone Model}

Since the pattern of communication in the asynchronous setting can be complex, our first goal in creating our new abstraction is to impose a simple but flexible structure for how processes communicate with each other. To this end, we introduce a meta-algorithm that is run by each process individually, independent of all others processes in the network. This allows us to analyze the running time of a particular instance of an algorithm and, from there, the performance of the algorithm across all concurrent network instances.

We will require two primary properties from our algorithmic structure. First, for our protocols to be truly asynchronous, they will not be able to follow a static procedural flow. Namely, after perfoming some action, an algorithm in this model may have to wait an indeterminate amount of time before performing another action or even being notified of the results of the first action. While we can parameterize an upper bound for this delay in the model for the sake of our analysis, it is unrealistic for an instance of the algorithm to be aware of this parameter. Second, we would like to abstract away the details of the asynchronous communication from the specfics of the algorithm, allowing us to keep our algorithm descriptions as simple as possible.
\begin{algorithm}
\caption{The Asynchronous MTM Interface}
\label{alg:amtm}
% -------------------------------------------------------
\begin{algorithmic}[1]
\State
\State \underline{Initialization:}
\State
\State $neighbors \gets [:]$
\State $state \gets$ {\tt idle}
\State $receiver \gets$ {\tt null}
\State \textproc{Initialize()}
\State
\While{{\tt true}}
\State
\State $tag \gets$ \textproc{GetTag}()
\State {\tt update}($tag$)
\State
\State $neighbors\gets$ {\tt blockForNeighborUpdates}()
\State
\State $receiver \gets$ \textproc{Select}($neighbors$)
\If{$receiver\neq {\tt null}$}
		\State $state \gets$ {\tt blockForConnection}($receiver$)
		\EndIf
\State
\If{$state$ = {\tt connected}}
\State \textproc{Communicate}($receiver$)
\State $state \gets$ {\tt idle}
\EndIf
\EndWhile
\end{algorithmic}
% -------------------------------------------------------

\end{algorithm}

We accomplish both of these goals by implementing a structure that resembles a looped synchronous algorithm but regulates its execution through access to data members that are updated asynchronously. Formalized in Algorithm \ref{alg:amtm}, the protocol initializes three fields:
\begin{itemize}
\item $neighbors$: A key-value store of references to neighboring processes whose advertisements have been received along with their advertisement tags. This set is maintained asynchronously by the model and updated whenever a new advertisement is received. Whenever a new advertisement is received, it replaces the last known advertisement for the corresponding neighboring process.
\item $state$: An enumerated type field chosen from the set $\{{\tt idle}, {\tt connected}\}$. Also modified asynchronously by the model, this field signifies the current progress in any connections the process is involved in.
\item $receiver$: A nullable reference to a single neighbor for communication purposes after a connection is formed.
\end{itemize}
While these fields accomplish our first goal of enabling our algorithms to execute asynchronously, we satisfy our second goal of abstracting communication details from the implementing algorithm by exposing an interface of four functions:
\begin{itemize}
\item \textproc{Initialize}(): Initialization of algorithm-specific data.
\item \textproc{GetTag}(): Return the advertisement tag for this process which is then broadcast to all neighboring processes.
\item \textproc{Select}($neighbors$): Return a neighbor (or {\tt null} for no neighbor) to connect to from among those discovered.
\item \textproc{Communicate}($receiver$): Perform a bounded amount of communication with selected neighbor $receiver$.
\end{itemize}
The execution of an iteration of the algorithm loop begins by getting the process' advertisement tag and broadcasting it to all neighboring processes. The model then blocks until a reference to a neighboring process is added to the $neighbors$ set. Once the $neighbors$ set contains at least one neighbor, the implementing algorithm selects one neighbor from the set and returns it. If the selected neighbor isn't {\tt null}, the protocol then attempts to connect with the selected neighbor, and blocks for another indeterminate duration of time for the connection attempt to succeed $(state\gets {\tt connected})$ or fail $(state\gets {\tt idle})$. If the connection succeeds, the two connected processes communicate before proceeding to the next iteration.

We assume that each step of the protocol executes instantly with the exception of the model functions {\tt blockForNeighborUpdates}() and {\tt blockForConnection()} and the algorithm function \textproc{Communicate}($receiver$). These functions implicitly block the protocol's execution. The model functions block the execution until the $neighbors$ and $state$ fields are available to be referenced by the algorithm, respectively, while \textproc{Communicate}($receiver$) stalls until the connected nodes communicate. In order for this abstraction to be useful to our analysis, however, we need to parameterize the maximum duration of these blocking events. We therefore define the corresponding model parameters $\delta_{update}$, $\delta_{connect}$, $\delta_{comm}$, and $\delta_{old}$ which are not known in advance and can change between executions:
\begin{itemize}
\item $\delta_{update}$: If a process $u$ calls {\tt update}($tag$) at time $\delta$, $u$ will be added to the $neighbors$ set of all neighboring processes by time $\delta + \delta_{update}$ at the latest. This is the maximum time for step 14 of the protocol.
\item $\delta_{old}$: Conversly, if a process $u$ calls {\tt update}($tag$) at time $\delta$, no neighboring process will add $u$ to their neighbors set after time $\delta_{old}$ where $\delta_{old} > \delta_{update}$.
% Intuitively, this means that there is at most a delay of $\delta_{update}$ after $u$ sends an advertisement before it is heard by all its neighbors. More formally, in the context of algorithm, it means that the loop from lines from a neighboring process is heard
\item $\delta_{connect}$: If a process $u$ calls {\tt connect}($v$) at time $\delta$, by time $\delta + \delta_{connect}$ at the latest, either the connection attempt will have failed or $u$ and $v$ will have succesfully connected. This is the maximum time for step 18 of the protocol.
\item $\delta_{comm}$: As stated in the model description, once a connection is formed, the connected processes may engage in a bounded amount of communication, $\delta_{comm}$ defines the maximum time required for this communication to occur. This is the maximum time for step 21 of the protocol.
\end{itemize}

Notice that the specified model only defines how to attempt outgoing connections. While this abstraction is similar to the mobile telephone model in that it restricts a process to one such connection attempt at a time, it will deviate slightly by allowing a single incoming connection attempt as well. This allowance will ease our analysis of algorithms in this setting as it frees a process to accept an incoming connection attempt regardless of its current state. For now, we will assume the process of accepting incoming connection attempts is simply to accept the first connection attempt received and call \textproc{Communicate}($sender$) where $sender$ is the source of the incoming connection.

\subsection{The Asynchronous Random Spread Gossip Algorithm}

We now instantiate our algorithm as a particular instance of the asynchronous mobile telephone model protocol by implementing the four functions specified by the interface. First we initialize the token set of the process to contain any tokens it knows. We also instantiate the hash function used for creating the advertisement tags:\\
% The asynchronous random spread gossip algorithm acts on the above model by implementing the required callback functions defined by the interface. The below pseudocode illustrates the node's initialization and interface implementation:\\
\begin{algorithmic}
\Function{Initialize}{}
\State $tokens \gets$ initial tokens (if any) known by $u$
\State $H \gets$ a hash function
\EndFunction
\end{algorithmic}
\bigskip
Next we define the tag function to simply return the hash of the token set that the process knows:\\
\begin{algorithmic}
\Function{GetTag}{}
\State
\Return $H(tokens)$
\EndFunction
\end{algorithmic}
\bigskip
To select a neighbor from those that a process has discovered, the algorithm will first create a filtered set of neighbors to only include those that would be productive to connect to (those neighbors with different token hashes). Then, following the random gossip strategy, it will select one such neighbor uniformly at random. If no productive neighbor exists then the algorithm doesn't select any neighbor and remains idle. Lastly, note that when a productive neighbor is selected, the algorithm clears its set of known neighbors. As we will see in Lemma \ref{lem:fault}, refreshing the set of known nearby processes minimizes the effect of faulty nodes on performance.\\
\begin{algorithmic}
\Function{Select}{neighbors}
\State $productiveNeighbors\gets\emptyset$
\State
\For{$neighbor$ in $neighbors$}
\If{$neighbor.value \neq \textproc{GetTag}()$}
\State $productiveNeighbors.add(neighbor.key)$
\EndIf
\EndFor
\State
\If{$productiveNeighbors \neq \emptyset$}
\State $neighbors \gets [:]$ // remove stale advertisements
\State \text{// chosen uniformly at random}
\State
\Return $receiver \in productiveNeighbors$
\Else \State
\Return {\tt null}
\EndIf
\EndFunction
\end{algorithmic}
\bigskip
Finally, if two processes form a succesful connection, they exchange a single token in the symmetric set difference between their two token sets:\\
\begin{algorithmic}
\Function{Communicate}{$receiver$}
\State $t\gets$ some $t\in (T$ $\Delta$ $receiver.T$)
\State (exchange token $t$)
\EndFunction
\end{algorithmic}

\subsection{Asynchronous Random Spread Gossip Analysis}
In this section we analyze the above algorithm. We begin with a proof of convergence, showing that in the worst case the asynchronous random spread gossip algorithm spreads all tokens to all nodes in the network in time $\bigO{nk\delta_{max}}$. We then take advantage of the vertex expansion $\alpha$ to demonstrate how it increases the rate at which a single token is spread.

\subsubsection{Proof of Convergence}
We begin our analysis by showing that the asynchronous random spread algorithm spreads all $k$ tokens to the entire network in time at most $\bigO{nk\delta_{max}}$. Firstly, for our analysis of the asynchronous setting, we will have to redefine our notion of the productive subgraph.

\begin{definition}
At time $\delta$, define $G_\delta$ to be the productive subgraph of the network $G=(V,E)$ at this time such that $G_\delta=(V, E_\delta)$ where $E_\delta = \{(u,v): H(u.tokens) \neq H(v.tokens)$ at time $\delta\}$.
\end{definition}

Notice, as in the previous section, we assume the very low probability event of hash collisions do not occur.
That is: $H(u.tokens) = H(v.tokens) \Longleftrightarrow u.tokens = v.tokens$.
With this in mind, we establish our first bound (remember in the following that $\delta_{update}$, $\delta_{connect}$, and $\delta_{comm}$ are the relevant
maximum time bounds---unknown to the algorithm---for key model behavior).

\begin{lemma}
The asynchronous random gossip algorithm takes time $O(nk\delta_{max})$ to spread all tokens where $n$ is the number of nodes in the network, $k$ is the number of tokens to spread, and $\delta_{max}=O(\delta_{update} + \delta_{connect}+\delta_{comm})$ is the maximum amount of time between iterations of the algorithm loop.
\end{lemma}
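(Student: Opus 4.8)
The plan is to transport the crude $\bigO{nk}$-round analysis from the synchronous model to the asynchronous one, replacing the statement ``at least one productive connection occurs per round'' with ``at least one productive connection completes per $\Theta(\delta_{max})$ units of time.'' The argument then rests on two independent pillars: first, a global upper bound on the number of productive connections that can ever occur; and second, a lower bound on how frequently a productive connection must occur while gossip remains unsolved. Multiplying the $\bigO{\delta_{max}}$ spacing by the total connection count yields the claimed $\bigO{nk\delta_{max}}$ running time.

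The first pillar is the easy part. I would track the potential $\Phi = \sum_{u\in V}(k - |T_u|)$, the total number of missing (node, token) pairs, which starts at most $nk$ and equals $0$ exactly when every node knows every token. By the definition of \textsc{Communicate}, each successful connection transfers exactly one token lying in the symmetric difference of the two endpoints' sets, so a productive connection decreases $\Phi$ by exactly one and a non-productive one (empty symmetric difference) leaves it unchanged. Hence at most $nk$ productive connections occur over the entire execution.

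The second pillar is the crux, and I would prove it by contradiction. Fix a window $W$ of length $c\,\delta_{max}$ (for a suitable constant $c$) beginning at a time when gossip is unsolved, and suppose no token transfer occurs during $W$. Then every token set is frozen throughout $W$, so after at most $\delta_{update}$ all advertisements accurately reflect these frozen sets and the staleness allowance $\delta_{old}$ becomes irrelevant. Since $G$ is connected and some token is still missing somewhere, the productive subgraph $G_\delta$ is non-empty throughout $W$: there is a node $u$ that retains a productive neighbor for the entire window. Within $\bigO{\delta_{max}}$ time $u$ finishes any ongoing connection and runs loop iterations in which it selects a productive neighbor $w$ uniformly at random and issues a connection attempt. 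Because the sets are frozen, $w$ is genuinely productive, so any connection $u$ forms with $w$ must transfer a token. The only way to prevent this is for $w$ to remain perpetually busy; but any connection currently holding $w$ completes within $\bigO{\delta_{max}}$ time, and under the frozen-set assumption such a completion moves no token and hence frees $w$ without violating the assumption. Thus within a constant number of $u$'s loop iterations---still $\bigO{\delta_{max}}$ total---$u$ forms a productive connection with some productive neighbor, transferring a token inside $W$ and contradicting the frozen-set hypothesis. Therefore at least one productive connection completes in every such window.

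Combining the pillars via a disjoint-window charging argument finishes the proof: partition time into consecutive windows of length $c\,\delta_{max}$; by the second pillar each window in which gossip is still unsolved contains a distinct productive connection, and by the first pillar there are at most $nk$ of these, so gossip is solved within $nk$ windows, i.e.\ in $\bigO{nk\delta_{max}}$ time. The main obstacle lies entirely within the second pillar: carefully formalizing the no-stall claim in the face of asynchrony---connection attempts that fail because a neighbor is busy, connection completions that transfer nothing because advertisements were stale by up to $\delta_{old}$, and the need to bound how many loop iterations $u$ requires before it catches a free productive neighbor. The frozen-set contradiction is precisely what neutralizes the staleness and wasted-connection issues, since under it no completion can move a token; making the ``busy targets eventually free up'' step rigorous (for example via a short extremal argument over the finitely many concurrent connections) is where the remaining care is needed.
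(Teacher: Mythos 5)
Your proposal is correct and follows essentially the same route as the paper: bound the total number of productive connections by $nk$ (one missing (node,token) pair resolved per connection) and show that while gossip is unsolved at least one productive connection completes in every $O(\delta_{update}+\delta_{connect}+\delta_{comm})$ window. The only cosmetic difference is that the paper argues the per-window progress directly---if $u$'s attempt on a productive neighbor $v$ fails, then $v$ is already busy in some connection with a $v'$, and a token is learned by one of $\{u,v,v'\}$---whereas you reach the same conclusion by a frozen-set contradiction; both hinge on the same observation that a blocked target is itself occupied by a connection that must move a token.
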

\begin{proof}
Fix some time $\delta$. Our goal is to show that within the interval $\delta$ to $\delta+\delta_{max}$,
at least one node learns a new token.
Because
this can only occur at most $nk$ times before all nodes know all tokens,
if we can show the above we have established the lemma.

Fix some time $\delta$. Let $G_{\delta}$ be the {\em productive subgraph} (see the above definition) at the beginning of this interval.
  If not all tokens have spread, clearly there exists a node $u$ such that the $deg(u) > 0$ in $G_{\delta}$. 
  
  By the guarantees of the model, 
  by time $\delta' \leq \delta + \delta_{update} + \delta_{connect}+\delta_{comm}$ , $u$ will have heard advertisements from all neighbors in $G_{\delta}$,
  and then subsequently looped back to the top of its main connect loop.
  
  For each neighbor $v$ in $G_{\delta}$, either $u$ adds $v$ to its set, or at some point after $\delta$, $v$ and $u$'s token sets changed
  such that $u.tokens = v.tokens$, preventing $u$ from adding $v$.
  In this case, however, at least one new token was learned by some node and we are done.
  If this is not the case, then $u$ now has a non-empty $productiveNeighbors$ set.
  
  Going forward, let $v$ be the node $u$ randomly chooses from this set. If the connection fails, this indicates that $v$ is involved in another connection with some other node $v'$.
  If the connection is successful, then $u$ and $v$ will exchange a token.
  Either way, a new token is learned by some node in $\{u,v,v'\}$
  in at most another $\delta_{connect}$ time.
  
  The total amount of time for some node to learn something new is in $O(\delta_{update}+\delta_{connect}+\delta_{comm})$, as needed.

\end{proof}

\begin{lemma} \label{lem:fault}
Let $t$ be the maximum number of faulty nodes in the network, the asynchronous random gossip algorithm takes time $O(\delta_{max}(nk+t))$ to spread all tokens.
\end{lemma}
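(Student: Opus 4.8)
The plan is to show that faulty nodes can degrade the convergence bound of the previous lemma by only an additive $O(t\,\delta_{max})$ term, by arguing that each faulty node can interfere with global progress during only a single window of length $O(\delta_{max})$. I would model a faulty node as one that crash-fails: at some point it stops advertising and stops participating in connections. The previous lemma already establishes that, absent faults, some node learns a new token in every interval of length $O(\delta_{max})$, and that at most $nk$ such token-learning events occur before completion. I would reuse exactly this accounting for the $nk$ term and isolate the extra cost attributable to faults.

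The key mechanism is the interaction between the \emph{Select} function clearing the $neighbors$ set whenever it chooses a productive neighbor (the ``refresh'' highlighted before the lemma) and the model's $\delta_{old}$ guarantee. Revisiting the witness argument of the convergence proof: when the witness node $u$ selects a productive neighbor $v$ and the connection attempt fails, the original proof concluded that $v$ must be busy in another (necessarily productive) connection, so progress is made regardless. With faults, a second possibility arises: $v$ is faulty, the attempt merely times out after $\delta_{connect}$, and no token is learned. Here I would invoke the refresh: having cleared its $neighbors$ set upon selecting $v$, and since $v$ has crashed and will never advertise again, $v$ is removed from $u$'s set and, by the $\delta_{old}$ bound, from \emph{every} node's set within $O(\delta_{max})$ time, never to be re-added. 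Thus a faulty node $v$ is selectable by any node only during a window of length $O(\delta_{old}) = O(\delta_{max})$ beginning at its crash time; before crashing it behaves as a correct node, and after the window it is invisible to all neighbors.

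With this established, I would finish by a partitioning/charging argument. During any interval of length $\Theta(\delta_{max})$ lying outside the active windows of all faulty nodes, no currently-selectable faulty neighbor exists, so the convergence lemma's witness argument applies verbatim and a new token is learned; these intervals therefore account for at most $O(nk\,\delta_{max})$ total time. The union of the faulty windows has total length at most $t\cdot O(\delta_{max}) = O(t\,\delta_{max})$. Summing the two contributions yields the claimed $O(\delta_{max}(nk+t))$ bound.

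The main obstacle is precisely this final charging step, and specifically arguing that a faulty node contributes only $O(\delta_{max})$ of wall-clock delay rather than a delay scaled by its degree. Many neighbors may each waste a connection attempt on the same crashed node, but because the model is asynchronous and these wasted attempts all fall within the single length-$O(\delta_{max})$ window during which the stale advertisement survives, they overlap in real time and collectively stall global progress for only $O(\delta_{max})$. Making this rigorous---i.e., showing that a globally unproductive interval forces the existence of a currently-active faulty window, so that the number of unproductive intervals is bounded by $t$ rather than by the total number of failed connection attempts---is exactly where the per-node refresh and the $\delta_{old}$ expiry guarantee must be combined carefully.
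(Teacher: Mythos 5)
Your proposal follows essentially the same route as the paper's proof: both model faults as crashes, use the $\delta_{old}$ expiry of stale advertisements together with the neighbor-set refresh in \textproc{Select} to confine each faulty node's interference to a single window of length $O(\delta_{old} + \delta_{max}) = O(\delta_{max})$ (under the same assumption $\delta_{old} = O(\delta_{max})$ that the paper makes explicitly), and add the resulting $O(t\,\delta_{max})$ slowdown to the $O(nk\,\delta_{max})$ convergence baseline. If anything, your final charging step---partitioning time into intervals inside and outside the faulty windows and arguing the wasted attempts overlap in wall-clock time---is articulated more carefully than the paper's proof, which simply asserts a per-node delay of $\delta_{old} + 2\delta_{max}$ and multiplies by $t$.
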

\begin{proof}

Again, consider the productive subgraph $G_{\delta}$ at a particular time $\delta$ for a node $u$ when its $neighbors$ set is empty. If no nodes leave the subgraph then $u$ is guaranteed to learn of all these neighbors and add them to its $neighbors$ set. However, now allow some node $v$ in $u$'s $neighbors$ set to experience a failure between times $\delta-\delta_{old}$ and $\delta+\delta_{max}$ (if the failure happens before $\delta-\delta_{old}$ then by the guarantee of the aMTM, $u$ will not have received $v$'s update). Upon entering an iteration of the outer loop, $u$ may attempt to connect with $v$ since $v$'s advertisement is still fresh. In this event, which is clearly the worst case, the connection fails and time at most $\delta_{max}$ was spent since this is the maximum amount of time the outer loop can possibly take.
\par This failure can happen in each new iteration of the outer loop for at most time $\delta_{old}$, at which point the advertisement ceases to update $u$'s neighbor set. Therefore, a single failed node can cause a delay of time at most $\delta_{old} + 2\delta_{max}$. Since there are $t$ faulty nodes, this introduces a total slowdown of $t(\delta_{old} + 2\delta_{max})$. Therefore, the time for this algorithm to spread all $k$ tokens is $\bigO{nk\delta_{max}} + t(\delta_{old} + \delta_{max})$. Furthermore, if we assume $\delta_{old} = \bigO{\delta_{update}}=\bigO{\delta_{max}}$, $\bigO{nk\delta_{max} + t2\delta_{max}}=\bigO{\delta_{max}(nk+t)}$.
\end{proof}
\begin{lemma}
Let $b$ be the maximum fraction of neighbors for a node $u$ that can be byzantine, the asynchronous random gossip algorithm takes time $O(nk\delta_{max}/(1-b))$ in expectation to spread all tokens.
\end{lemma}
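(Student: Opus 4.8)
The plan is to adapt the convergence argument (the first lemma of this subsection) by charging for the connection attempts that byzantine neighbors are able to waste. As in that proof, I would first observe that at most $nk$ \emph{productive connections}---connection events in which some honest node learns a new genuine token---can occur before every honest node holds all $k$ tokens, since each such event permanently increases the total honest token count by at least one. It therefore suffices to bound the \emph{expected} time between consecutive productive connections and multiply by $nk$ using linearity of expectation.

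Next I would fix a time $\delta$ at which spreading among the honest nodes is incomplete and locate, exactly as in the convergence lemma, an honest node $u$ with positive degree in the productive subgraph restricted to honest nodes. Following that argument, within $O(\delta_{max})$ time $u$ loops back to the top of its main loop holding a non-empty $productiveNeighbors$ set and draws one member uniformly at random. The crux is this draw. A byzantine neighbor can always advertise a mismatched hash and thereby insert itself into $u$'s productive pool, and a connection to such a node yields no genuine token exchange; this is precisely the mechanism by which byzantine behavior delays progress. Since at most a $b$ fraction of $u$'s candidates are byzantine, the uniform draw selects an honest productive neighbor with probability at least $1-b$, in which case a productive connection completes after a further $O(\delta_{max})$ time, whereas a byzantine draw merely wastes one loop iteration, again $O(\delta_{max})$ time.

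Consequently each $O(\delta_{max})$-length window yields a productive connection with probability at least $1-b$, so the expected number of windows between productive connections is at most $1/(1-b)$ and the expected time per productive connection is $O(\delta_{max}/(1-b))$. To handle the byzantine scheduler's ability to correlate its behavior with the execution history, I would invoke a stochastic dominance argument of the same flavor used in the proof of Theorem~\ref{thm:main}: because the per-window success probability is at least $1-b$ independent of history, the number of productive connections achieved within any fixed number of windows stochastically dominates a sum of independent $\mathrm{Bernoulli}(1-b)$ trials. Summing the expected cost $O(\delta_{max}/(1-b))$ over the at most $nk$ required productive connections then gives the claimed $O(nk\delta_{max}/(1-b))$ bound.

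I expect the main obstacle to be rigorously justifying the $1-b$ lower bound on the per-window success probability. This requires bounding the byzantine fraction within the \emph{active selection pool} rather than merely among all of $u$'s neighbors---byzantine nodes may preferentially advertise productive-looking hashes, so one must either argue this fraction is still at most $b$ or read the hypothesis as applying to the productive set directly---together with the guarantee that an honest productive connection genuinely remains available whenever spreading is incomplete, which in turn needs the honest induced subgraph to stay connected despite byzantine removals. The correlation of byzantine placement with token-spreading progress across windows is the secondary difficulty, which the stochastic dominance step is designed to neutralize.
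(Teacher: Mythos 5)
Your proposal is correct and follows essentially the same route as the paper: both arguments observe that each $O(\delta_{max})$ window yields an honest productive connection with probability at least $1-b$, that $nk$ such connections suffice, and hence that $nk/(1-b)$ windows are needed in expectation. The subtlety you flag about bounding the byzantine fraction within the active selection pool (rather than among all neighbors) is real, but the paper simply asserts the probability-$b$ bound without addressing it, so your treatment is if anything slightly more careful than the original.
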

\begin{proof}
If the productive subgraph stays connected, the worst event that can occur during the interval of length $\delta_{max}$ is that an honest node chooses a byzantine neighbor to connect to. This happens with probability at most $b$ and therefore a node engages in a productive, honest connection with probability at least $1-b$. Consider the series $m$ of intervals of time at most $\delta_{max}$ and label them with the indicator variables $X_1,\ldots,X_m$ such that:
\[
X_i =
\begin{cases}
0 &\text{if the node in interval $i$ connects to a}\\ &\text{byzantine node} \\
1 &\text{otherwise }
\end{cases}
\]
\begin{align*}
nk &= E\big[\sum_{i=1}^{i=m}{X_i}\big] \\
&= \sum_{i=1}^{i=m}{E[X_i]} \\
&= \sum_{i=1}^{i=m}{1-b} = m(1-b)
\end{align*}
Therefore, achieving $nk$ successes in expectation, would take $m=\frac{nk}{1-b}$ intervals. Since each interval takes at most $\delta_{max}$ time, the algorithm takes time $O(nk\delta_{max}/(1-b))$.
\end{proof}

\subsubsection{Analysis of Spreading a Single Token}

We now analyze the spread of a single token in the network to demonstrate that the performance of the algorithm still improves with the vertex expansion of the network $\alpha$ in an asynchronous setting. Our goal in this subsection is to prove the following time bound to spread a single token:

\begin{theorem} \label{thm:async}
The asynchronous random spread gossip algorithm takes time at most $\bigO{\delta_{max}\sqrt{n/\alpha}\log^2{(n\alpha)}}$, where $n$ is the number of nodes in the network, $\alpha$ is the vertex expansion, and $\delta_{max}$ is the maximum time required for an iteration of the asynchronous mobile telephone model loop.
\end{theorem}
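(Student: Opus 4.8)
The plan is to recover the synchronized matching-based analysis of Section~\ref{sec:mtm} on top of the adversarial asynchronous schedule by grouping time into \emph{windows}. Using the convergence lemma's observation that every node completes a full iteration of the loop of Algorithm~\ref{alg:amtm} within any interval of length $\delta_{max}=O(\delta_{update}+\delta_{connect}+\delta_{comm})$, I would fix windows of length $\Theta(\delta_{max})$ so that inside each window every informed node that currently has an uninformed neighbor issues at least one random connection attempt. Since $k=1$, every productive connection informs exactly one new node, so it suffices to bound the number of windows until all $n$ nodes are informed. As in the synchronous argument I would track $n_t(r)$, the number of informed nodes at the start of window $r$, together with its folded version $n^*_t=\min\{n_t,n-n_t\}$, and handle the growth phase ($n^*_t$ increasing while $n_t\le n/2$) and the shrink phase ($n_t>n/2$) symmetrically, since the cut between informed and uninformed nodes, and hence the matching guarantee of Lemma~\ref{lem:msize}, is invariant under complementation.

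The heart of the proof is a single \emph{per-window progress} lemma: with constant probability a window that begins with $n^*_t=s$ informs $\Omega\!\big(\sqrt{\alpha s}/\log^2(n\alpha)\big)$ new nodes (and always at least one while a productive edge exists). To prove it I would invoke Lemma~\ref{lem:msize} to fix a matching $M$ of size $m\ge(\alpha/4)s$ across the informed/uninformed cut, and model the window's random selections as balls thrown by informed endpoints into their uninformed neighbors, so that the number of newly informed nodes is the number of bins receiving a first successful invitation. The obstruction to simply reproducing the $\Omega(m/\mathrm{polylog})$ guarantee of Lemma~\ref{lem:match} is precisely the absence of reliable history: a fast informed node re-samples while some neighbors' advertised tags are still stale (the model only forces an old tag to stop propagating after $\delta_{old}\ge\delta_{update}$ time), so it may repeatedly select neighbors that have silently become informed, wasting invitations that the synchronous $done$ flag would have suppressed. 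I would bound this waste and show that, even after discounting stale selections, a birthday/second-moment argument still certifies $\Omega(\sqrt{m})$ distinct fresh receivers per window, with the two logarithmic factors absorbing the high-degree dilution (the analogue of the $\log\Delta$ thinning in Lemma~\ref{lem:match}) and the concentration tail.

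Granting the per-window lemma, the global bound follows by a telescoping/integral summation: since a window at informed count $n_t$ adds $\Omega(\sqrt{\alpha n_t}/\log^2(n\alpha))$ nodes, the number of windows needed to climb from $n_t$ to $2n_t$ is $O(\sqrt{n_t/\alpha}\log^2(n\alpha))$, and summing the resulting geometric series over the $O(\log n)$ doublings up to $n/2$ (the largest term dominating, so no extra logarithm appears) yields $O(\sqrt{n/\alpha}\log^2(n\alpha))$ windows; the shrink phase contributes the same up to a constant factor. To convert the constant per-window success probability into a high-probability statement I would reuse the stochastic-dominance-plus-Chernoff device from the proof of Theorem~\ref{thm:main}, dominating the (history-dependent) indicators of successful windows by independent biased coins and concentrating their sum. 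Multiplying the window count by the window length $\Theta(\delta_{max})$ then gives the claimed $O(\delta_{max}\sqrt{n/\alpha}\log^2(n\alpha))$ time.

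The main obstacle is the per-window lemma, and specifically justifying the square-root rather than the linear progress of the synchronous model. I expect the delicate point to be quantifying exactly how much the lack of a trustworthy connection history degrades spreading when the adversary makes some nodes iterate far faster than their neighbors' tags refresh; controlling this staleness-induced waste, and showing it still leaves $\Omega(\sqrt{\alpha s})$ genuinely productive connections per window, is where the $\sqrt{n/\alpha}$ bound is both earned and revealed to be essentially the best a history-free process can guarantee.
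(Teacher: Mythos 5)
Your proposal follows essentially the same route as the paper: fixed-length windows of $\Theta(\delta_{max})$, Lemma~\ref{lem:msize} to extract a matching of size $m=\Omega(\alpha n^*)$ across the informed/uninformed cut, a balls-in-bins argument certifying $\Omega(\sqrt{m}/\log m)$ distinct receivers selected per window (the paper's Lemma~\ref{lem:sqrtlemma}, proved via a core/non-core split on degree weight), and a doubling/halving summation over $O(\log n)$ stages for the two symmetric phases. The only divergences are bookkeeping: the paper places one logarithm in the per-window progress and acquires the second from the number of doublings, and it dispenses with your explicit staleness-waste accounting by simply observing (Lemma~\ref{lem:randomlemma}) that a sender's productive-neighbor set never exceeds its degree in the productive subgraph fixed at the start of the window, so each matched uninformed neighbor is selected with probability at least $1/deg(u)$ regardless of churn.
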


 Unlike with our analysis of the synchronous algorithm, we cannot directly leverage a productive subgraph that remains stable through synchronized rounds.
We must instead identify cores of useful edges amidst the unpredictable churn and argue that over a sufficiently long interval they deliver a sufficiently large number of new tokens.

%
%
% This is because at a particular time $t$ there is no guarantee that, for all edges $(u,v)\in E_t$,
%that $u$ and $v$ have heard each other's advertisements. To circumvent this issue, we will still fix $G_\delta$ but examine the subgraph of $G_\delta$ that represents only those edges that correspond to the advertisements heard some time within the interval $(\delta, \delta_{update})$. This will allow us to analyze the behavior of the nodes at this time and we return to the entire productive subgraph later.

We accomplish this by fixing the productive subgraph at $G_{\delta}$ and observe an interval of length $2\delta_{max}$. During this interval, we want to show that for every edge $(u,v)\in E_{\delta}$ such that $u$ is informed and $v$ is uninformed, either $v$ becomes otherwise informed or $u$ returns $v$ from \textproc{Select}($neighbors$) with good probability during this interval. Namely, this probability is lower-bounded by the probability $u$ would return $v$ if $neighbors$ included all of $u$'s neighbors from $G_{\delta}$ itself.

\begin{lemma} \label{lem:randomlemma}
% For a fixed time $\delta$ and fixed edge $(u,v)\in E_{\delta}$. Node $u$ returns $v$ from \textproc{Select} with probability at least $deg_\delta(u)$ (where $deg(u)$ is the degree of $u$ in the productive subgraph) and the resulting connection attempt concludes no later than time $\delta+2\delta_{max}$.
For a fixed time $\delta$ and fixed edge $(u,v)\in E_{\delta}$ such that $u$ knows the token and $v$ does not, if $v$ does not otherwise learn the token in this interval, node $u$ returns $v$ from \textproc{Select}($neighbors$) uniformly at random from a set of at most $deg(u)$ nodes where $deg(u)$ is the degree of $u$ in the productive subgraph and the resulting connection attempt concludes no later than time $\delta+2\delta_{max}$.
\end{lemma}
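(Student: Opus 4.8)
The plan is to combine a monotonicity observation about informed nodes with a careful accounting of the asynchronous loop timing. First I would fix the time $\delta$, the productive subgraph $G_\delta$, and the edge $(u,v)\in E_\delta$ with $u$ informed and $v$ uninformed, and record that (for the single-token case) $u$'s productive neighbors in $G_\delta$ are precisely its uninformed neighbors, of which there are $deg(u)$. The crucial structural fact is monotonicity: since a node never loses a token, the set of uninformed nodes only shrinks with time, so $u$'s set of uninformed (hence productive) neighbors can only decrease after $\delta$. This is what will let me bound the cardinality of the set passed to \textproc{Select} by $deg(u)$, rather than by $u$'s instantaneous neighbor count, which the adversary could otherwise inflate.

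Next I would establish the timing. At time $\delta$ node $u$ sits at some arbitrary point of its main loop; since $\delta_{max}$ upper bounds a full iteration (its $\delta_{update}+\delta_{connect}+\delta_{comm}$), by time $\delta+\delta_{max}$ node $u$ has completed that iteration and begun a fresh one, re-advertising its tag and entering {\tt blockForNeighborUpdates}. Waiting at most $\delta_{update}$ more, it repopulates $neighbors$ and calls \textproc{Select}; because $v$ is still uninformed and keeps advertising a hash distinct from $u$'s, the $\delta_{update}$ guarantee places $v$ into this set, so $v$ lands in $productiveNeighbors$. As \textproc{Select} draws uniformly from $productiveNeighbors$ and the ensuing {\tt blockForConnection} returns within $\delta_{connect}$, the connection attempt concludes by time $\delta+\delta_{max}+\delta_{update}+\delta_{connect}\le \delta+2\delta_{max}$, matching the claimed bound.

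The remaining and main obstacle is showing $|productiveNeighbors|\le deg(u)$ despite the asynchronous churn and possibly stale advertisements. Here I would use monotonicity in the reverse direction: any node $w$ appearing in $productiveNeighbors$ advertises a hash distinct from $u$'s, which for $k=1$ means $w$ was uninformed at the moment it issued that advertisement; if that moment is no earlier than $\delta$, then by monotonicity $w$ was uninformed at $\delta$ itself, hence a $G_\delta$-neighbor of $u$ counted in $deg(u)$. To guarantee that every advertisement $u$ holds at the \textproc{Select} call was issued at or after $\delta$, I would exploit that \textproc{Select} clears $neighbors$ after each successful selection together with the $\delta_{old}$ guarantee, which expires any advertisement older than $\delta_{old}$; choosing the reference window so that stale pre-$\delta$ advertisements have already lapsed then pins $|productiveNeighbors|\le deg(u)$ and yields selection probability at least $1/deg(u)$. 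The genuinely delicate part of the whole argument is precisely this freshness bookkeeping—reconciling the adversarial, algorithm-unknown $\delta_{update}$ and $\delta_{old}$ with the fixed reference time $\delta$—and I expect to lean on the assumption $\delta_{old}=O(\delta_{max})$ (as used elsewhere in this section) to keep the entire argument inside the stated $\delta+2\delta_{max}$ interval.
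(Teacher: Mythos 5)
Your proposal matches the paper's own proof in all essentials: both fix $G_\delta$, use the monotonicity of the informed set to cap $|productiveNeighbors|$ at $deg(u)$, and account for the time as one leftover loop iteration (at most $\delta_{max}$) plus $\delta_{update}$ to repopulate $neighbors$ plus $\delta_{connect}+\delta_{comm}$ for the attempt, all within $2\delta_{max}$. Your extra attention to stale pre-$\delta$ advertisements (via the clearing of $neighbors$ in \textproc{Select} and the $\delta_{old}$ guarantee) is a point the paper's proof passes over silently, so it is a welcome refinement rather than a deviation.
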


\begin{proof}
Fix the productive subgraph at this time, $G_{\delta}$ and fix an informed node $u$ and uninformed node $v$. Since $u$ is an informed node, all of its edges in the productive subgraph are incident to uninformed nodes. Since nodes never forget the token, the number of uninformed nodes can only decrease. Now consider an execution of \textproc{Select}($neighbors$) before time $\delta + \delta_{max}$ in which $v$ is added to $u.productiveNeighbors$. Since by assumption $v$ does not otherwise learn the token in this interval, it must be the case that $v$ advertised its uninformed status in this interval and been included in $u.neighbors$ and subsequently $u.productiveNeighbors$ so we know this occurs at least once in the interval $\delta + \delta_{max}$ (the extra time $\delta_{connect}+\delta_{comm}$ is to allow an additional iteration of $u$'s loop before \textproc{Select} is called). Furthermore, we know that since the number of uninformed neighbors can't increase from that in the productive subgraph $G_\delta$, there can be at most $deg(u)$ neighbors in $u.productiveNeighbors$. Since $u$ returns a particular neighbor from this set with uniform randomness, the probability that $u$ returns $v$ is at least $1/deg(u)$. Furthermore, regardless whether or not the resulting connection attempt is a success or a failure, it finishes in at most $\delta_{connect}+\delta_{comm}$ additional time for a total maximum time of $\delta_{update}+2(\delta_{connect}+\delta_{comm})<2\delta_{max}$.

\end{proof}
Now that we have quantified the amount of time necessary for a node to successfully connect, we need an estimate for how many connections we can expect to be succesful. Similar to our previous analysis, this is dependent on the amount of competition between connection attempts sent to a single node.
We begin with a useful graph theory definition.
\iffalse
\note{A general stylistic note: You need to introduce definitions and lemmas before they appear. I added some text to this effect below, but in other places where this is missing, you'll want to add it.
I changed the name from {\em weight} to {\em degree weight} below, because you're defining a specific weight based on degree.}
\fi
\begin{definition}
For a graph $G=(V,E)$, we define the \textbf{degree weight} of a node be the sum of the weights of all incoming edges, where the weight of each edge $(u,v)$ is $1/deg(u)$. Formally:
\[w(v)=\sum_{\forall u\in V, (u,v)\in E}w(u,v)=\sum_{\forall u\in V, (u,v)\in E}1/deg(u)\]
\end{definition}

We now prove a useful result about one-round random matchings in a bipartite graph that leverages our degree weight definition in its proof.

\begin{lemma} \label{lem:sqrtlemma}
For a bipartite graph G=(X, Y, E) with edge independence $\edgeindnum{B(X)}=|X| = m$. 
Assume each node $u\in X$ selects a neighbor with uniform randomness with probability $1/deg(u)$. With at least constant probability, at least $\sqrt{m}/\log{m}$ distinct nodes from $Y$ are selected.
\end{lemma}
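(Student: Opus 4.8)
The hypothesis $\edgeindnum{B(X)} = |X| = m$ means that the bipartite graph admits a matching $M$ saturating every node of $X$; fix one such matching and write $M(u)\in Y$ for the partner of $u\in X$. Each $u\in X$ independently selects one of its $\deg(u)$ neighbors uniformly at random; for $v\in Y$ let $c_v$ be the number of nodes of $X$ that select $v$, and let $Z=|\{v: c_v\geq 1\}|$ be the number of distinct selected nodes. Note that $\sum_{v\in Y} c_v = m$ and, recalling the degree weight $w(v)=\sum_{u\sim v}1/\deg(u)$, that $E[c_v]=w(v)$, $\sum_{v} w(v)=m$, and the uniform bound $w(v)\leq W$ holds for every $v$, where $W:=\sum_{u\in X}1/\deg(u)$ (since the neighbors of $v$ form a subset of $X$). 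The plan is to split on the single scalar $W$, which measures how prone the selections are to falling on matched partners.

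First I would handle the case $W\geq \sqrt{m}$. Let $I=\{u: u \text{ selects } M(u)\}$. Since $M$ is a matching, the nodes $\{M(u):u\in I\}$ are pairwise distinct, so $Z\geq |I|$. The variable $|I|=\sum_{u\in X}\mathbb{1}[u\text{ selects }M(u)]$ is a sum of independent indicators with $E[|I|]=\sum_u 1/\deg(u)=W\geq\sqrt{m}$, so the multiplicative Chernoff bound already used in the paper (with $\delta=1/2$) gives $|I|\geq \sqrt{m}/2$ with probability $1-e^{-\Omega(\sqrt{m})}$, hence with at least constant probability. This alone yields $Z\geq \sqrt{m}/2\geq \sqrt{m}/\log{m}$ for $m$ above a fixed constant.

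Second I would handle the complementary case $W<\sqrt{m}$. Here the uniform bound forces $w(v)\leq W<\sqrt{m}$ for every $v$, so no single target can attract much load. I would control the second moment: since each $c_v$ is a sum of independent indicators, $E[c_v^2]\leq w(v)+w(v)^2$, whence $E\big[\sum_v c_v^2\big]\leq \sum_v w(v)+\big(\max_v w(v)\big)\sum_v w(v)\leq m+\sqrt{m}\cdot m\leq 2m^{3/2}$. By Markov's inequality, $\sum_v c_v^2<4m^{3/2}$ with probability at least $1/2$. Cauchy--Schwarz applied to the $Z$ nonzero loads gives $m^2=\big(\sum_v c_v\big)^2\leq Z\sum_v c_v^2$, so on this event $Z\geq m^2/(4m^{3/2})=\sqrt{m}/4\geq \sqrt{m}/\log m$, again for $m$ above a fixed constant (the claim being trivial for the remaining constant-size $m$, where $\sqrt{m}/\log m=O(1)$ and at least one node is always selected once an edge is present).

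The main obstacle is that the degree weight can legitimately concentrate on a single node of $Y$ (e.g. many degree-two $X$-nodes all adjacent to one common target), which defeats both a naive ``count the heavy nodes'' argument and a direct second-moment/Cauchy--Schwarz bound, since one enormous $c_v$ can dominate $\sum_v c_v^2$ while contributing only $1$ to $Z$. The case split on $W$ is precisely what circumvents this: when $W$ is large the guaranteed distinctness of matched partners supplies $\Omega(\sqrt{m})$ collision-free selections directly, and when $W$ is small the uniform bound $w(v)\leq W$ caps every load and makes the second-moment argument go through. The only care needed is to verify $w(v)\leq W$ and that both the self-match indicators and the per-node load indicators are genuinely independent across $u$, which holds because the selections are made independently.
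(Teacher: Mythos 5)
Your proof is correct, but it takes a genuinely different route from the paper's. The paper partitions $Y$ into a ``core'' of targets with constant degree weight and the rest, and splits on whether the core has at least $\sqrt{m}$ members: in the first case it argues each core node is hit with probability $1-e^{-w(v)}$ and applies Markov to the unselected count; in the second it shows $\Omega(\sqrt{m})$ senders select non-core targets in expectation, conditions on that event, and then union-bounds a Chernoff estimate to show no target absorbs more than $c\log m$ selections---which is exactly where the $\log m$ loss in the lemma statement comes from. You instead split on the single scalar $W=\sum_{u\in X}1/\deg(u)$: when $W\geq\sqrt{m}$ you exploit the saturating matching guaranteed by $\edgeindnum{B(X)}=|X|$ and count senders that hit their own matched partner, which are automatically distinct, so one Chernoff bound suffices; when $W<\sqrt{m}$ the uniform cap $w(v)\leq W$ lets a second-moment bound plus Cauchy--Schwarz control the collision structure without any conditioning or union bound. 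Both case analyses are sound (the key identities $\sum_v w(v)=m$ and $\sum_v c_v=m$ are used correctly, and the indicators are independent because selections are made independently across $X$). What your route buys is twofold: it is more elementary, avoiding the paper's somewhat delicate ``conditioned on the event that $\sqrt{m}$ nodes select non-core nodes'' step, and it actually proves the stronger conclusion that $\Omega(\sqrt{m})$ distinct nodes are selected, with no $\log m$ loss---which, if propagated through Lemma~\ref{lem:step} and Lemma~\ref{lem:constantfraction}, would shave a logarithmic factor off Theorem~\ref{thm:async}. The paper's per-node core/non-core decomposition gives finer information about which individual targets are hit, but for the counting statement at hand your argument is both simpler and tighter.
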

\begin{proof}
Partition the nodes of $Y$ into a ``core" set of nodes $Z$ with constant degree weight, and a ``non-core" set of nodes $Y \setminus Z$ with less than constant weight. 

We first consider the case where $|Z| \geq \sqrt{m}$. Here it is sufficient to show that nodes with at least constant weight are selected with constant probability. 
For a node $v\in Z$ such that $(u,v)\in E$, the probability that $u$ does not select $v$ is at most $1-1/deg(u)=1-w(u,v)$. Therefore the probability that $v$ is selected by {\em some} node is:
\begin{align*}
\Pr[\text{$v$ is selected}] &\geq 1 - \Pi_{u,(u,v)\in E}(1 - w(u,v))\\
&\geq 1 - \Pi_{u, (u,v)\in E}e^{-w(u,v)}\\
&\geq 1 - e^{-\sum_{u, (u,v)\in E}w(u,v)}\\
&\geq 1 - e^{-w(v)}
\end{align*}
Since by our assumption $w(v)$ is a constant, $v$ is selected with at least constant probability. If we denote this probability $p$, we can express the probability that $v$ is not selected as $1-p$. Therefore, the expected number of nodes in the core set that are not selected is at most $(1-p)\sqrt{m}$. Let $W$ be the number of core nodes that are not selected, we can apply Markov's inequality to demonstrate that the probability we exceed this expectation by more than a constant fraction is at most constant:
\begin{align*}
\Pr[W\geq 2(1-p)\sqrt{m}] \leq \frac{1}{2}
\end{align*}
Therefore, with at least a constant probability, $\bigO{\sqrt{m}}$ nodes are selected from the core set.
% \iffalse
% \note{For the above case, you have established that $v$ is selected with constant probability. But the lemma says that with constant probability, at least $\sqrt{m}$ nodes are selected. So there
% is still work to do to get from your analysis of a single node to the total number of nodes. You can probably apply markov's inequality to bound the number of unselected nodes.

% For the below case, we need a clearer and stronger way of arguing the $1/\sqrt{m}$ selection probability on nodes in $Y \setminus Z$.
% For one thing, it's not true that every node in $X$ selects a node in that set with that probability. Some nodes in $X$ may only have edges to nodes in $Z$.
% I think you mean each node in $Y\setminus Z$ has a degree weight of at least $1/\sqrt{m}$. The argument for this case needs to be more precise some how...

% Also please carefully check that your argument about multiple selections is right. Is it really the case that the probability that the $i^{th}$ node makes an overlapping suggestion 
% is $(i-1)/m$? This implies that each such node is choosing among all $m$ nodes equally doesn't it? Does that hold?}
% \fi

Now consider the case where $|Z|<\sqrt{m}$. Observe that for a node $u\in X$ that neighbors a node in $Y\setminus Z$, the sum of the edge weights for edges $(u,v)$ such that $v\in Y\setminus Z$ is at least $1/\sqrt{m}$. This is because $u$ can select at most $\sqrt{m}-1$ other nodes that are not in $Y\setminus Z$. Therefore, for each node in $X$ that neighbors a node in $Y\setminus Z$, the node in $X$ chooses a node in $Y\setminus Z$ with probability at least $1/\sqrt{m}$. Since there must be $m-|Z|$ such nodes in $X$ that neighbor nodes in $Y\setminus Z$, in expectation at least $\bigO{m/\sqrt{m}}=\bigO{\sqrt{m}}$ nodes in $X$ select a node in $Y \setminus Z$.\\

Next, conditioned on the event that $\sqrt{m}$ nodes from $X$ select non-core nodes, we need to show that not too many of the nodes in $Y \setminus Z$ are chosen multiple times. Namely, we would like to show that the probability that any node is selected by more than $c\log{m}$ nodes (for some sufficiently large constant $c$) from $X$ that choose a non-core node is small. Fix a node $v\in Y$ and define the indicator variable $I_j$ as follows:
\[
X_j =
\begin{cases}
1\text{ the $j$th node in $X$ selects $v$}\\
0\text{ otherwise}
\end{cases}
\]
Since the size of the maximum matching is size $m$ and there are $m$ nodes in $Y$, we know that $v$ has at most constant degree weight and therefore, in expectation, is selected by at most a constant number of nodes from $X$. Denote this constant expectation $\mu$ and apply the following Chernoff bound to the sequence $I_1,\ldots,I_m$ with expectation $\mu$ to upper bound the probability that the total number of such nodes $I=\sum_{j=1}^{m}I_j$ exceeds $c\log{m}$. For a sufficiently large constant $c$ and constant $\mu$ we find that this probability is polynomially-small in $m$:
\begin{align*}
\Pr[I\geq (1+ \epsilon)\mu] &\leq e^{-\frac{\epsilon\mu}{3}} \\
\Pr[I\geq \mu c \log{m}] &\leq e^{-\frac{\mu(c\log{m}-1)}{3}} = \frac{e^{\mu/3}}{me^{c\mu/3}} \leq \frac{1}{m}
\end{align*}
Therefore, if we apply the union bound over the at most $m-\sqrt{m}\leq m$ nodes in $Y\setminus Z$ we can upper bound the probability that any such node is selected by at least $c\log{n}$ nodes in $X$:
\begin{align*}
  \sum_{i\in[m]}\frac{e^{\mu/3}}{me^{c\mu/3}}=m\frac{e^{\mu/3}}{me^{c\mu/3}} < constant
\end{align*}
% can have at most a constant number of nodes.  $u\in X$ for which $v$ is $u$'s only neighbor. Therefore, most nodes in $u\in X$ (any number greater than constant) must have at least one other neighbor besides $v$ in $Y$. Therefore, the probability $u$ selects $v$ is at most $1/2$. Therefore, the probability that at least $c\log{n}$ nodes select $v$ is at most $(1/2)^{m^{1/4}}$. If we apply the foll union bound over all $m$ nodes in $Y\setminus Z$ we can upper bound the probability that any such node is selected by at least $m^{1/4}$ nodes in $X$:
Therefore, with at least constant probability no node is selected by at least $c\log{n}$ nodes from $X$ that choose non-core nodes. Therefore, given that $\sqrt{m}$ nodes select non-core nodes, with at least a constant probability at least $\bigO{\sqrt{m}/\log{m}}$ nodes are selected.
\end{proof}

The above lemmas allow us to quantify the number of successful connections made in an interval of length $\delta_{update}+2\delta_{max}$ with respect to $G_\delta$ for some time $\delta$, but we need to relate this result back to the productive subgraph as a whole.
% \iffalse
% \note{How are you getting high probability below? You want to fix the productive subgraph at $G_{\delta}$ and identify its matching of size $m$.
% You need to then apply the bipartite graph lemma to show that if informed nodes on the frontier properly selected neighbors in the prodcutive subgraph in a synchronized
% way then with constant probability a root-factor would be selected.
% Then you then to connect this to the connection lemma from before, which is the tricky part}
% \fi
\begin{lemma} \label{lem:step}
Fix the productive subgraph $G_\delta$ with a maximum matching of size $m$. With high probability, $\sqrt{m}/\log{m}$ successful connections will occur by time $\delta+ \delta_{update}+2\delta_{max}$.
\end{lemma}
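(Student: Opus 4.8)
The plan is to combine Lemma~\ref{lem:randomlemma} (which says each informed node $u$ with an uninformed productive neighbor $v$ will, within the interval, select a connection uniformly from a set of size at most $deg(u)$ that includes $v$) with Lemma~\ref{lem:sqrtlemma} (the one-round random matching bound giving $\sqrt{m}/\log m$ distinct selected targets). The bridge between the two is Lemma~\ref{lem:msize}, which lets us pass from the maximum matching of size $m$ in the productive subgraph to a bipartite structure on which Lemma~\ref{lem:sqrtlemma} applies.

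**First** I would fix the productive subgraph $G_\delta$ and its maximum matching $M$ of size $m$. Let $X$ be the set of informed endpoints and $Y$ the set of uninformed endpoints of edges in $G_\delta$; by the structure of the productive subgraph every edge runs between an informed and an uninformed node, so $B(X)$ is bipartite with $\nu(B(X)) \geq m$. The goal is to set up the instance so that the ``selection'' behavior guaranteed pointwise by Lemma~\ref{lem:randomlemma} matches the global independent-selection hypothesis of Lemma~\ref{lem:sqrtlemma}: each informed node $u \in X$ selects a uniformly random neighbor with per-edge probability $1/deg(u)$, exactly the degree-weight model. Applying Lemma~\ref{lem:sqrtlemma} then yields that with at least constant probability, at least $\sqrt{m}/\log m$ distinct uninformed nodes in $Y$ are \emph{selected} during the interval.

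**The main obstacle** is that ``selected'' is not the same as ``successfully connected.'' Lemma~\ref{lem:randomlemma} guarantees that $u$ returns $v$ from \textproc{Select} and that the resulting connection attempt concludes by $\delta + 2\delta_{max}$, but a selection can fail if $v$ is simultaneously engaged in another connection. I would argue that a selection failure is itself evidence of progress: if $u$'s attempt on $v$ fails, then $v$ (or the competing node) is occupied in another connection, which by the token-exchange semantics delivers a new token somewhere. Thus each of the $\sqrt{m}/\log m$ distinct selected targets either completes $u$'s productive connection or witnesses a concurrent productive connection, so the \emph{number of successful productive connections} is lower-bounded by a constant fraction of the distinct selections. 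A second subtlety is the gap between ``constant probability'' (from Lemma~\ref{lem:sqrtlemma}) and the ``high probability'' claimed here; I would close this by a standard boosting argument, repeating the constant-probability event over $O(\log n)$ independent sub-intervals within the window (or, more cleanly, by noting the lemma as stated will be invoked repeatedly and amplified in the downstream time-bound argument).

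**Finally**, I would assemble the timing: Lemma~\ref{lem:randomlemma} certifies that each selected connection attempt concludes no later than $\delta + 2\delta_{max}$, and the extra $\delta_{update}$ accounts for the one advertisement-propagation delay needed before every relevant neighbor appears in $u$'s \textproc{Select} call. Hence all $\sqrt{m}/\log m$ successful connections are realized by time $\delta + \delta_{update} + 2\delta_{max}$, as claimed. The most delicate step remains the selected-versus-connected accounting, since the adversarial scheduler controls exactly which concurrent attempts collide; I expect the cleanest route is to charge each failed selection to a distinct concurrent token transfer and verify no transfer is double-charged within the interval.
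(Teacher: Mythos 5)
Your proposal follows essentially the same route as the paper: fix the maximum matching of size $m$ in $G_\delta$, use Lemma~\ref{lem:randomlemma} to argue that every productive edge $(u,v)$ accumulates selection weight at least $1/deg(u)$ over the interval, and then invoke Lemma~\ref{lem:sqrtlemma} on the resulting bipartite degree-weight instance to obtain $\sqrt{m}/\log{m}$ distinct selected targets. The two ``obstacles'' you flag are in fact places where you are more careful than the paper's own proof. On selected-versus-connected: the paper silently relies on the aMTM convention that a node accepts the \emph{first} incoming connection attempt regardless of its current state, so each distinct selected target in $Y$ contributes one successful connection and no further accounting is needed; your charging argument (a failed attempt witnesses a concurrent productive connection) reaches the same count by a different route and is sound, just unnecessary under that convention. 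On constant-versus-high probability: you have identified a genuine gap --- Lemma~\ref{lem:sqrtlemma} only delivers the bound with constant probability, the paper's proof of this lemma establishes nothing stronger, yet the statement claims ``with high probability'' and no amplification is ever carried out. Note, though, that your first proposed repair (repeating over $O(\log{n})$ independent sub-intervals) cannot work within the stated window of length $\delta_{update}+2\delta_{max}$ without inflating the time bound by a $\log{n}$ factor; your parenthetical alternative --- stating the lemma with constant probability and amplifying across the many intervals in the downstream doubling argument of Lemma~\ref{lem:constantfraction} --- is the viable fix and is what the paper implicitly assumes.
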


\begin{proof}
For a fixed $G_{\delta}$, if we consider an edge $(u,v)\in E_{\delta}$ consisting of an informed node $u$ and uninformed node $v$ such that $v$ is not otherwise informed in this interval, we know from Lemma \ref{lem:randomlemma} that $u$ adds $v$ to $u.productiveNeighbors$ some time before $\delta+\delta_{update}+\delta_{max}$. Since $u$ returns $v$ from \textproc{Select}($neighbors$) with probability at least $1/deg(u)$, inclusion of $v$ in $u.produtveNeighbors$ represents a selection weight of at least $1/deg(u)$ for the edge $(u,v)$. Since the selection weight for this edge never decreases, the edge weight for $(u,v)$ accumulated in this interval (and therefore its selection probability) must be at least $1/deg(u)$ (the weight the edge would have in the productive subgraph itself). Therefore, the collection of these edge weights observed over this interval represents a bipartite graph with maximum matching of size $m$ where each edge is selected with probability at least $1/deg(u)$. Therefore, according to Lemma \ref{lem:sqrtlemma}, with at least constant probability, $\sqrt{m}/\log{m}$ nodes are selected over this interval. Furthermore, each connection attempt takes at most $\delta_{max}$ time which concludes our Lemma as long as Lemma \ref{lem:randomlemma} holds.

However, since Lemma \ref{lem:randomlemma} assumes that for each of these productive edges $(u,v)$, $v$ is not otherwise informed, we must consider this case as well. However, since $(u,v)\in E_{\delta}$ and $v$ can only have been informed through a prior successful connection during this interval, it should be clear that this event does not reduce the number of successful connections that take place during this interval and so the Lemma is still satisfied.
\end{proof}
To continue our analysis with respect to the vertex expansion, we now relate the expected number of productive connections at fixed points in time to the $\alpha$.
\begin{lemma} \label{lem:constantfraction}
Let $S(\delta)$ be the subset of informed nodes such that $n(\delta)=|S(\delta)|$ and $n^*_\delta=\min(|S(\delta)|, |V\setminus S(\delta)|)$. Furthermore, abbreviate $\delta_{update}+2\delta_{max}$ to $\delta_{interval}$. With high probability, if $n(\delta) \leq n/2$, it takes at most time $2\delta_{interval}\log{(n(\delta)\alpha)}\sqrt{n(\delta)/\alpha}$ to at least double the number of informed nodes. Explicitly, with high probability: $$n(\delta + 2\delta_{interval}\log{(n(\delta)\alpha)}\sqrt{n(\delta)/\alpha})\geq 2n(\delta)$$
\end{lemma}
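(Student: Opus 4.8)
The plan is to chain together the single-interval progress guarantee of Lemma~\ref{lem:step} across consecutive intervals of length $\delta_{interval} = \delta_{update} + 2\delta_{max}$, using vertex expansion to keep the per-interval progress bounded below until the informed set has doubled. Fix the time $\delta$ and write $n_0 = n(\delta) \le n/2$. Since $n_0 \le n/2$ we have $n^*_\delta = n_0$, so by Lemma~\ref{lem:msize} the productive subgraph $G_\delta$ --- which for a single token is exactly the bipartite graph between the informed set $S(\delta)$ and its complement --- has a maximum matching of size $m \ge (\alpha/4)\,n_0$. More generally, I would argue that at \emph{every} time in the doubling window, while the informed count $i$ satisfies $n_0 \le i < 2n_0$, the current productive subgraph still has a matching of size $\Omega(\alpha n_0)$, so that Lemma~\ref{lem:step} may be reapplied with the same lower bound no matter how far spreading has already progressed within the window.

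Granting that matching bound, the core of the argument is a direct counting computation. Partition the time after $\delta$ into consecutive intervals of length $\delta_{interval}$. By Lemma~\ref{lem:step} applied to the productive subgraph at the start of each interval, with high probability that interval produces at least $\sqrt{m}/\log{m} = \Omega\!\big(\sqrt{\alpha n_0}/\log(\alpha n_0)\big)$ successful connections, each of which (by the distinctness of the selected nodes in Lemma~\ref{lem:sqrtlemma}) informs a distinct previously-uninformed node. Since doubling requires informing only $n_0$ new nodes, the number of intervals needed is at most
\[
\frac{n_0}{\Omega\!\big(\sqrt{\alpha n_0}/\log(\alpha n_0)\big)} = O\!\big(\log(n_0\alpha)\,\sqrt{n_0/\alpha}\big),
\]
and multiplying by $\delta_{interval}$ yields the claimed bound $2\delta_{interval}\log(n(\delta)\alpha)\sqrt{n(\delta)/\alpha}$, matching the stated constant. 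The probability is handled by a union bound over this polynomial number of intervals: each interval delivers its quota with high probability by Lemma~\ref{lem:step}, while the per-interval lower bound on $m$ holds deterministically as long as the informed count lies in $[n_0,2n_0)$. Hence, exactly as in the stochastic-dominance argument used to prove Theorem~\ref{thm:main}, the per-interval guarantees compose independently of the execution history.

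I expect the main obstacle to be justifying the uniform matching lower bound $m = \Omega(\alpha n_0)$ across the entire window. This is immediate when $2n_0 \le n/2$: throughout the window the informed count stays below $n/2$, so the informed set is the smaller side of the cut and $n^* = i \ge n_0$. The delicate case is $n/4 < n_0 \le n/2$, where the informed count can cross $n/2$ before doubling completes; there the uninformed side becomes the smaller side, $n^* = n-i$ shrinks toward $n-2n_0$, and the guaranteed matching size --- hence the per-interval progress --- can degrade. I would address this by observing that $\min(i,n-i) \ge \min(n_0,\,n-2n_0)$ over the window, and, where this is too weak, either restrict the doubling claim to the growth regime $n_0 \le n/4$ and defer the final approach to $n$ to a symmetric ``shrink'' argument when assembling Theorem~\ref{thm:async}, or amortize the interval count so that the slower late intervals are absorbed into the same geometric sum. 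A secondary point worth making explicit is that the $\sqrt{m}/\log{m}$ distinct selected nodes of Lemma~\ref{lem:sqrtlemma} really do correspond one-to-one to newly informed nodes here, which is what lets successful connections translate directly into growth of the informed set.
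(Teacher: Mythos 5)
Your proposal is correct and follows essentially the same route as the paper: chain Lemma~\ref{lem:step} over consecutive intervals of length $\delta_{interval}$, lower-bound the matching via Lemma~\ref{lem:msize}, use monotonicity of the informed count to keep the per-interval gain at least $\Omega(\sqrt{n(\delta)\alpha}/\log(n(\delta)\alpha))$, and solve for the number of intervals needed to double. The boundary concern you raise about the informed count crossing $n/2$ mid-window is legitimate and is not explicitly addressed in the paper's proof of this lemma; the paper instead relies on the symmetric shrink argument in the proof of Theorem~\ref{thm:async} to cover the regime past $n/2$, which matches the first of your two proposed fixes.
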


\begin{proof}
Consider a sequence of fixed times $\delta_0,\ldots,\delta_t$ that are time $\delta_{interval}$ apart. Lemma \ref{lem:step} estimates the number of succesful connections with respect to the size of the maximum matching while Lemma \ref{lem:msize} which relates the size of the maximum matching size $m$ to the vertex expansion such that $m\geq n(\delta)\alpha$ for $0<n(\delta)\leq n/2$. Therefore, with high probability, the number of nodes that become informed between $\delta_{i-1}$ and $\delta_i$ is $n(\delta_{i-1}) + (1/2)\sqrt{n(\delta_{i-1})\alpha}/\log{(n(\delta_{i-1})\alpha)}$. Therefore, the number of nodes that are informed by time $\delta_t$ is: $$n(\delta_t)=n(\delta_0)+(1/2)\sqrt{n(\delta_{0})\alpha}/\log{(n(\delta_{0})\alpha)}+\ldots$$
$$\ldots+(1/2)\sqrt{n(\delta_{t-1})\alpha}/\log{(n(\delta_{t-1})\alpha)}$$
Since clearly $n(\delta_j)\geq n(\delta_{i})$ for any $j\geq i$, we can simplify the above:
$$n(\delta_t)\geq n(\delta_0) + (t/2)\sqrt{n(\delta_{0})\alpha}/\log{(n(\delta_{0})\alpha)}$$
Lastly, if we set $n(\delta_t)=2n(\delta_0)$ we can solve for $t$:
\begin{align*}
2n(\delta_0) &\geq n(\delta_0) + (t/2)\sqrt{n(\delta_{0})\alpha}/\log{(n(\delta_{0})\alpha)}\\
n(\delta_0) &\geq  (t/2)\sqrt{n(\delta_{0})\alpha}/\log{(n(\delta_{0})\alpha)}
\end{align*}
\[2\log{(n(\delta_{0})\alpha)}\sqrt{n(\delta_{0})/\alpha} \geq  t \]

Since there are most $2\log{(n(\delta_{0})\alpha)}\sqrt{n(\delta_{0})/\alpha}$ steps of length $\delta_{interval}$, the total time to double the number of informed nodes from $n(\delta_0)$ is at most $2\delta_{interval}\log{(n(\delta_{0})\alpha)}\sqrt{n(\delta_{0})/\alpha}$.
 % a sequence of timesteps $\delta0,\ldots,\delta_{\sqrt{n/\alpha}}$, such that $\delta_{i+1}=\delta_{i} +\delta_{update}+2\delta_{max}$ for $i\in\{0,\ldots,\sqrt{n/\alpha}-1\}$, we know from Lemma \ref{lem:step} that with high probability $n^*(\delta_{i+1}) = n^*(\delta_{i}) + \sqrt{\alpha n^*(\delta_i)/4})$. Therefore, again with high probability, $n^*(\delta^\prime) = n^*(\delta_{0}) + \sqrt{\alpha n^*(\delta0)/4}) + \ldots + \sqrt{\alpha n^*(\delta_{\sqrt{n/\alpha}-1})/4})$. Furthermore, since $n^*(\delta_{i+1}) \geq n^*(\delta_{i})$, $n^*(\delta^\prime)\geq n^*(\delta_{0}) + \sqrt{n/\alpha}\cdot\sqrt{\alpha n^*(\delta0)/4})\geq (5/4)n^*(\delta0)$.
\end{proof}
We now use the length of this interval to analyze the time required to spread the token to half of the nodes in the network.
\begin{lemma} \label{lem:half}
It takes time at most $\bigO{\delta_{max}\sqrt{n/\alpha}\log^2{(n\alpha)}}$ to spread the token to $n/2$ nodes in the network.
\end{lemma}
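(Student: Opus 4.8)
The plan is to bootstrap the single-doubling guarantee of Lemma~\ref{lem:constantfraction} into a bound on the time needed to inform a majority of the network. I would start from the unique source of the token, so that $n(\delta)=1$ at the outset, and apply Lemma~\ref{lem:constantfraction} repeatedly: each application at least doubles the informed set while $n(\delta)\le n/2$, and consumes at most $2\delta_{interval}\log(n(\delta)\alpha)\sqrt{n(\delta)/\alpha}$ time, where $\delta_{interval}=\delta_{update}+2\delta_{max}=O(\delta_{max})$. Since we stop the instant the informed count reaches $n/2$, and each step at least doubles that count, the number of doubling steps is at most $\lceil\log(n/2)\rceil=O(\log n)$.

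The heart of the argument is summing these per-step durations. Writing $n_j$ for a lower bound on the informed count at the start of the $j$-th doubling, we have $n_j\ge 2n_{j-1}$, so the $n_j$ form at least a geometric progression up to $n/2$ and the durations $\sqrt{n_j/\alpha}$ grow geometrically with ratio at least $\sqrt{2}$. I would bound the logarithmic factor uniformly by $\log(n\alpha)$ (legitimate since every $n_j\le n/2$) and factor it out, so the total time is at most $2\delta_{interval}\log(n\alpha)\sum_j\sqrt{n_j/\alpha}$. Because the $n_j$ grow geometrically, the geometric series is dominated by its final term and $\sum_j\sqrt{n_j}=O(\sqrt{n/2})=O(\sqrt{n})$, yielding a total of $O(\delta_{max}\sqrt{n/\alpha}\log(n\alpha))$. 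Since $\alpha\ge 2/n$ forces $\log(n\alpha)\ge 1$, this sits comfortably inside the stated $O(\delta_{max}\sqrt{n/\alpha}\log^2(n\alpha))$ bound, in fact with a factor of $\log(n\alpha)$ to spare.

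Two points require care, and the second is the main obstacle. First, the high-probability conclusion of Lemma~\ref{lem:constantfraction} must hold simultaneously across all $O(\log n)$ doublings; I would dispatch this with a union bound, which is harmless because each failure probability is polynomially small in $n$, so their sum remains polynomially small. Second, and more delicate, is the slow-start regime in which $n(\delta)$ is small: when $n(\delta)\alpha$ is close to $1$, the quantity $\sqrt{n(\delta)\alpha}/\log(n(\delta)\alpha)$ underlying Lemma~\ref{lem:step} barely guarantees one new informed node per interval, and the handful of intervals comprising an early doubling are too few to concentrate the merely constant-probability event of Lemma~\ref{lem:sqrtlemma} into a high-probability one. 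I expect the cleanest fix is to treat the initial growth of the informed set from $1$ up to $\Theta(1/\alpha)$ separately, bounding its duration crudely and verifying it is absorbed by the stated bound, and to invoke the doubling lemma only once $n(\delta)$ is large enough that each interval delivers many new tokens and the Chernoff concentration driving the high-probability guarantee takes hold.
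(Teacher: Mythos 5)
Your proposal follows essentially the same route as the paper: iterate the doubling guarantee of Lemma~\ref{lem:constantfraction} for $O(\log n)$ phases and sum the per-phase time costs. In fact your accounting is sharper than the paper's---the paper simply multiplies the $\log n$ phase count by the worst-case per-phase duration $O(\delta_{max}\sqrt{n/\alpha}\log(n\alpha))$ to land on the stated bound, whereas your geometric-series summation saves a $\log$ factor---and the two caveats you flag (the union bound across phases, and the slow-start regime where $n(\delta)\alpha=O(1)$ so the constant-probability guarantee underlying Lemma~\ref{lem:sqrtlemma} does not concentrate) are genuine gaps that the paper's own one-paragraph proof passes over in silence.
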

\begin{proof}
By Lemma \ref{lem:constantfraction}, we can see that if there $n(\delta)$ informed nodes for a given time $\delta$, after time $2\delta_{interval}\log{(n(\delta)\alpha)}\sqrt{n(\delta)/\alpha}$, with high probability we at least double the number of informed nodes. Therefore, to find the number $t$ of intervals required, it suffices to solve for $T$ such that:
\[2^{T-1}=n/2\]
Which yields $T=\log{n}$. Therefore, since each interval takes time at most $\bigO{\delta_{interval}\sqrt{n/\alpha}\log{(n\alpha)}}=\bigO{\delta_{max}\sqrt{n/\alpha}\log{(n\alpha)}}$, the total time required is $\bigO{\delta_{max}\sqrt{n/\alpha}\log^2{(n\alpha)}}$.
% Since $2^{i-1}$ nodes know the token at time $i-1$, step $i$ takes time $2\delta_{interval}\sqrt{2^{i-1}/\alpha}$. Therefore, the total time to spread to $n/2$ nodes is:
% \begin{align*}
% \sum_{i=0}^{\log{n} - 1}2\delta_{interval}\sqrt{2^i/\alpha} &= \frac{2\delta_{interval}}{\sqrt{\alpha}}\sum_{i=0}^{\log{n} - 1}\sqrt{2}^i \\
% &= \frac{2\delta_{interval}}{\sqrt{\alpha}}\frac{1-\sqrt{2}^{\log{n}}}{1 - \sqrt{2}}\\
% &= \frac{2\delta_{interval}}{\sqrt{\alpha}}\frac{\sqrt{2}^{\log{n}}-1}{\sqrt{2}-1} \\
% &= \bigO{\sqrt{2}^{\log{n}}/\sqrt{\alpha}} \\
% &= \bigO{{2^{\log{n}}}^{1/2}/\sqrt{\alpha}} \\
% &= \bigO{n^{1/2}/\sqrt{\alpha}} \\
% &= \bigO{\sqrt{n/\alpha}}
% \end{align*}
\end{proof}
We now have all the necessary components to prove our main theorem about the time required to spread the token to all nodes in the network.
\begin{proof} [Proof (of Theorem~\ref{thm:async})]
When $n(\delta) > n/2$, our goal is to reduce the number of uninformed nodes by half. However, we can no longer relate the size of the maximum matching to the number of informed nodes since we are instead limited by the uninformed nodes since $|V\setminus S_\delta| < |S_\delta|$. Therefore, for fixed times $\delta_0,\ldots,\delta_t$ which are $\delta_{interval}$ apart, we can express the number of uniformed nodes at time $\delta_t$, $n^*({\delta_t})$, as:
$$n^*(\delta_t)=n^*(\delta_0)-(1/2)\sqrt{n^*(\delta_{0})\alpha}/\log{(n^*(\delta_{0})\alpha)}-\ldots$$
$$\ldots-(1/2)\sqrt{n^*(\delta_{t-1})\alpha}/\log{(n^*(\delta_{t-1})\alpha)}$$
Setting $n^*({\delta_t})=n^*({\delta_0})/2$ and solving for $t$ yields:
$$n^*(\delta_0)/2 =n^*(\delta_0)-(1/2)\sqrt{n^*(\delta_{0})\alpha}\log{(n^*(\delta_{0})\alpha)}-\ldots$$
$$\ldots-(1/2)\sqrt{n^*(\delta_{t-1})\alpha}/\log{(n^*(\delta_{t-1})\alpha)}$$
$$\geq (t/2)\sqrt{n^*(\delta_{t-1})\alpha}/\log{(n^*(\delta_{t-1})\alpha)}$$

Which shows that $t\leq 2\sqrt{n(\delta_{t-1})/\alpha}\log{(n(\delta_{t-1})/\alpha)}$, similar to Lemma \ref{lem:constantfraction}. We can apply a proof symmetric to that of Lemma \ref{lem:half} to relate this time to the time to spread the token to all the remaining $n/2$ nodes. Observe that again we need $T=\log{n}$ intervals of length $2\delta_{max}\sqrt{n^*(\delta)/\alpha}\log{(n^*(\delta)\alpha)}$ since we are halving the number of uniformed nodes each time. Therefore, the total time once again is $\bigO{\delta_{max}\sqrt{n/\alpha}\log^2{(n\alpha)}}$ to spread the token to all remaining nodes. Therefore, the total running time of the algorithm is $\bigO{\delta_{max}\sqrt{n/\alpha}\log^2{(n\alpha)}}$.
% Look at matching at the matching of the productive subgraph limite by the n /2 nodes.
\end{proof}

\bibliographystyle{IEEEtran}
\bibliography{randomGossip}

%%%%%%%%%%%%%%%%%%%%%%%%%%%%%%%%%%%%%%%%%%%%%%%%%%%%%%%%%
%%%%%%%%%%%%%%%%%%%%%%%%%%%%%%%%%%%%%%%%%%%%%%%%%%%%%%%%%
%%%%%%%%%%%%%%%%%%%%%%%%%%%%%%%%%%%%%%%%%%%%%%%%%%%%%%%%%
\end{document}